\documentclass[sigconf]{acmart}
\AtBeginDocument{%
  }

\setcopyright{acmlicensed}
\copyrightyear{2018}
\acmYear{2018}
\acmDOI{XXXXXXX.XXXXXXX}
\acmConference[Conference acronym 'XX]{Make sure to enter the correct
  conference title from your rights confirmation email}{June 03--05,
  2018}{Woodstock, NY}
\acmISBN{978-1-4503-XXXX-X/2018/06}

\usepackage{amsmath}
\usepackage{algorithm}
\usepackage{algorithmic}
\usepackage{array}
\usepackage{url}

\usepackage{framed}
\usepackage{multirow}
\usepackage{xspace}
\usepackage{enumitem}
\setlist{leftmargin=*}
\newcommand{\etal}{{\em et al.}\xspace}
\newcommand{\ie}{{\em i.e.,}\xspace}
\newcommand{\eg}{{\em e.g.,}\xspace}
\newcommand{\etc}{{etc.}\xspace}

\usepackage{caption}
\usepackage{subcaption}
\usepackage{graphicx}
\usepackage{booktabs}
\usepackage{makecell}

\newcommand{\BfPara}[1]{{\vspace{1.5mm}\noindent\bf#1.}\xspace}

\usepackage{soul}

\usepackage[most]{tcolorbox}
\tcbset{
  colback=gray!5!white,
  colframe=gray!75!black,
  fonttitle=\bfseries,
  boxrule=0.4pt,
  arc=3pt,
  left=6pt,
  right=6pt,
  top=6pt,
  bottom=6pt,
  enhanced
}

\begin{document}

\title{Multi-Channel Spread-Spectrum Code Watermarking}

\author{Soohyeon Choi}
\email{shchoi@smu.edu.sg}
\orcid{0009-0002-1252-2263}
\affiliation{%
  \institution{Singapore Management University}
  \country{Singapore}
}
\author{Debin Gao}
\email{dbgao@smu.edu.sg}
\orcid{0000-0001-9412-9961}
\affiliation{%
  \institution{Singapore Management University}
  \country{Singapore}
}

\author{Yue Duan}
\email{yueduan@smu.edu.sg}
\orcid{0000-0000-0000-0000}
\affiliation{%
  \institution{Singapore Management University}
  \country{Singapore}
}

\renewcommand{\shortauthors}{Choi \etal}

\begin{abstract}
Attributing code to the large language model that produced it is essential for provenance, licensing, and misuse accountability, yet no deployed watermark meets this need. Generation-time schemes require access to the producing model and cannot be applied to third-party code, while post-hoc schemes work on any code but carry at most 4 bits of payload, far too few to distinguish the many deployed model configurations. 
We present multi-channel spread-spectrum watermarking, the first post-hoc, training-free code watermark with a 24-bit payload and formal robustness guarantees. The scheme encodes bits in variable naming conventions and in eight pairs of semantically equivalent code patterns, and a keyed pseudo-random permutation maps every site to a codeword bit so that each bit receives multiple independent votes. Majority voting absorbs distributed corruption, while an outer Reed-Solomon code recovers the identifier when concentrated channel attacks defeat the vote, yielding provable robustness bounds for formatting, syntactic, and structural attacks.
Across 1,750 Python files from CodeNet and from GPT-4.1 and Llama-4 generations, the watermark achieves 100\% clean-detection accuracy with zero false positives. Under 17 attack types, it recovers the identifier at 97.6\% accuracy under 8 variable renames and 94.1\% under 10\% random per-site corruption, while the strongest post-hoc baseline collapses to 0\% under any single-transform attack. Embedding and detection together take under 200 ms on CPU without training data or GPU.

\end{abstract}

\begin{CCSXML}
<ccs2012>
   <concept>
       <concept_id>10002978.10002991.10002996</concept_id>
       <concept_desc>Security and privacy~Digital rights management</concept_desc>
       <concept_significance>500</concept_significance>
       </concept>
   <concept>
       <concept_id>10002978.10003022</concept_id>
       <concept_desc>Security and privacy~Software and application security</concept_desc>
       <concept_significance>500</concept_significance>
       </concept>
   <concept>
       <concept_id>10003752.10010061.10010067</concept_id>
       <concept_desc>Theory of computation~Error-correcting codes</concept_desc>
       <concept_significance>300</concept_significance>
       </concept>
   <concept>
       <concept_id>10010147.10010178</concept_id>
       <concept_desc>Computing methodologies~Artificial intelligence</concept_desc>
       <concept_significance>500</concept_significance>
       </concept>
 </ccs2012>
\end{CCSXML}

\ccsdesc[500]{Security and privacy~Digital rights management}
\ccsdesc[500]{Security and privacy~Software and application security}
\ccsdesc[300]{Theory of computation~Error-correcting codes}
\ccsdesc[500]{Computing methodologies~Artificial intelligence}

\keywords{Code watermarking, Large language models, Model attribution, Multi-channel watermarking, Spread-spectrum, Reed-Solomon codes, Post-hoc watermarking}

\received{20 February 2007}
\received[revised]{12 March 2009}
\received[accepted]{5 June 2009}

\maketitle

\section{Introduction}
Large language models (LLMs) for code generation have moved rapidly from research demonstration to mainstream developer tool. Tools such as GitHub Copilot~\cite{chen2021evaluating}, GPT-4~\cite{openai2023gpt}, Code Llama~\cite{roziere2023code}, and Qwen-Coder~\cite{hui2024qwen} are now integrated into IDEs, pull-request review systems, and enterprise development pipelines, producing a substantial fraction of the code shipped to public and private repositories. This deployment has created three classes of accountability problems that share a single technical root, the reliable attribution of a given piece of code to the model that produced it.

The first class is security. LLM-generated code has been shown to contain exploitable vulnerabilities at rates comparable to or exceeding human-written code~\cite{pearce2022asleep,perry2023do,sandoval2023lost}, and when such code reaches production, incident response requires attributing the vulnerable artifact to a specific model release before the provider can be notified and affected codebases systematically audited. The second is licensing. Large code models are known to memorize and reproduce training data verbatim under adversarial extraction~\cite{carlini2021extracting}, and when an alleged infringement surfaces, determining which provider is liable requires proof that the suspect artifact originated from their model. The third is academic integrity, where instructors increasingly need not just to flag LLM-generated submissions but to identify which model produced them, since institutional policies often permit some tools while prohibiting others, and public detectors address neither task reliably~\cite{sadasivan2023can}. All three scenarios play out the same way from the provider's perspective. An LLM provider generates code, distributes it to users, and later needs to prove that a suspicious artifact originated from a specific model release. The adversary is the end user or a downstream party, who may reformat the code, rename variables, insert or remove comments, or pass the code through another LLM to launder provenance. Any workable solution must bind a model identifier to the generated code at or shortly after generation, survive these modifications, and avoid false alarms on unrelated code.

\BfPara {Limitations of Existing Techniques} Two families of techniques have been proposed for this goal, and both fall short for third-party attribution where the LLM is unavailable at verification time. \emph{Authorship detection} trains a classifier to decide whether code is LLM-generated~\cite{choi2025attributing,choi2025find}. Its accuracy depends on a statistical gap between human and model code that narrows as models improve, and recent analyses show that such detectors can be defeated by paraphrasing, translation, or light editing~\cite{sadasivan2023can}. Detection is also binary only (whether code is LLM-generated), not which model produced it, so it cannot be used for attribution. \emph{Code watermarking} embeds a signal into the generated code itself. Generation-time schemes such as KGW~\cite{kirchenbauer2023watermark}, SWEET~\cite{lee2024wrote}, STONE~\cite{kim2025marking}, and CODEIP~\cite{guan2024codeip} bias the LLM's token sampling and require access to the decoding process, carrying either zero payload or a payload validated only under narrow attacks. Post-hoc schemes drop the model-access requirement but either carry no payload and collapse under any single-transform attack~\cite{li2025efficient} (see Section~\ref{sec:baseline_comparison}) or require end-to-end neural training for just a 4-bit identifier, supporting only 16 model IDs (SrcMarker~\cite{yang2024srcmarker}, RoSeMary~\cite{zhang2025robust}). 

In short, no existing code-LLM watermarking system simultaneously delivers a payload large enough for the model configurations deployed today, robustness against realistic code transformations, and training-free operation on CPU.

\BfPara {Our Approach} To address this gap, we propose multi-channel spread-spectrum watermarking, the first post-hoc, training-free code watermark with a large (24-bit) payload and formal robustness guarantees for formatting, syntactic, and structural attacks. Our key insight is that pairing two independent families of semantic-preserving code modification with spread-spectrum voting forces the attacker to corrupt both families at once, doubling the modification budget required to break the watermark compared to any single-channel scheme. Bits are spread across two channel types. \textbf{Variable naming conventions} carry 2 bits per variable, encoded in the choice among four naming styles (\texttt{snake\_case}, \texttt{camelCase}, \texttt{PascalCase}, and \texttt{ALL\_CAPS}). \textbf{Structural code patterns} carry 1 bit per site, encoded in eight pairs of semantically equivalent syntactic forms listed in Table~\ref{tab:channels}.

Every available site participates in the watermark through a keyed pseudo-random permutation that maps sites to codeword bit positions~\cite{cox1997secure, collberg2002watermarking}, so each codeword bit receives votes from multiple independent sites. Majority voting across these votes gives a first layer of error correction, absorbing distributed corruption. A Reed-Solomon (RS) outer code~\cite{reed1960polynomial} gives a second layer, recovering the identifier when concentrated channel attacks defeat the vote. The two channel families are independent, since renaming variables does not affect structural patterns and modifying operators does not affect variable names. Breaking the watermark, therefore, requires a larger total modification budget than any attack on one family alone could achieve.





Figure~\ref{fig:listings} illustrates this on a small Python function watermarked with the identifier ``gpt4''\footnote{The short example is for exposition. Its four or so embedding sites are not enough to carry a full 24-bit identifier. Minimum capacity requirements are analyzed in Section~\ref{sec:eval}.}.
The watermarked version differs from the original in three places, each carrying watermark bits.

\begin{figure}[t]
{\setlength{\fboxsep}{1pt}%
\begin{minipage}[t]{0.48\columnwidth}
\begin{lstlisting}[label=lst:original,numbers=none,xleftmargin=0.6em,framexleftmargin=0.6em,basicstyle=\ttfamily\footnotesize,escapeinside={(*}{*)}]
def check_prime((*\colorbox{blue!20}{\strut\texttt{num\_val}}*)):
    if (*\colorbox{red!20}{\strut\texttt{num\_val <= 1}}*):
        return False
    for i in range(2, (*\colorbox{blue!20}{\strut\texttt{num\_val}}*)):
        if (*\colorbox{red!20}{\strut\texttt{num\_val \% i == 0}}*):
            return False
    return True
\end{lstlisting}
\centering\small (a) Original
\end{minipage}\hfill
\begin{minipage}[t]{0.48\columnwidth}
\begin{lstlisting}[label=lst:watermarked,numbers=none,xleftmargin=0.6em,framexleftmargin=0.6em,basicstyle=\ttfamily\footnotesize,escapeinside={(*}{*)}]
def check_prime((*\colorbox{blue!20}{\strut\texttt{NumVal}}*)):
    if (*\colorbox{red!20}{\strut\texttt{1 >= NumVal}}*):
        return False
    for i in range(2, (*\colorbox{blue!20}{\strut\texttt{NumVal}}*)):
        if (*\colorbox{red!20}{\strut\texttt{0 == NumVal \% i}}*):
            return False
    return True
\end{lstlisting}
\centering\small (b) Watermarked (``gpt4'')
\end{minipage}}
\caption{LLM-generated code (a) and its watermarked counterpart (b). Blue marks the variable-naming channel, red marks structural channels.}
\label{fig:listings}
\end{figure}

\begin{enumerate}
  \item \textbf{Variable naming:} \texttt{num\_val} becomes \texttt{NumVal}. The 2-bit variable-naming slot carries the value \texttt{10} (\texttt{PascalCase}).
  \item \textbf{Comparison direction:} \texttt{num\_val <= 1} becomes \texttt{1 >= NumVal}. The 1-bit structural slot carries the value \texttt{1} (flipped from the canonical form).
  \item \textbf{Equality operand order:} \texttt{num\_val \% i == 0} becomes \texttt{0 == NumVal \% i}. The 1-bit structural slot carries the value \texttt{1} (operands swapped from the canonical form).
\end{enumerate}

Each modification is semantic-preserving, and the watermarked code produces identical output for all inputs. The detector, given the secret key, reads choices back and recovers the embedded identifier.

Our evaluation validates this design across 1,750 Python files drawn from CodeNet~\cite{puri2021project} and from GPT-4.1 and Llama-4 generations, under 17 attack types. The system recovers the 24-bit identifier with 100\% no-attack accuracy and zero false positives, matching the theoretical bound of $1/2^{24}$. Robustness reaches 97.6\% under 8 variable renames, 100\% under 32 renames on larger files, and 94.1\% under 10\% random per-site corruption. Embedding completes in 65 to 145 ms and detection in 28 to 50 ms, with no GPU or training data required.

\BfPara{Contributions} The contributions of this paper are summarized as follows:
\begin{enumerate}
  \item A \textbf{multi-channel spread-spectrum embedding} that distributes watermark bits across variable naming conventions and eight structural code patterns, gaining robustness through redundant voting across independent channels.
  \item A \textbf{two-layer error correction} architecture combining spread-spectrum majority voting with RS codes, with formal robustness theorems for formatting, syntactic, and structural attacks, and an empirical capacity-robustness tradeoff in which lower $t$ can outperform higher $t$ on large files.
  \item A \textbf{24-bit payload} with false positive rate $1/2^{24}$ and post-hoc deployment in 65 to 145 ms, requiring no model access, no training data, and no GPU, in contrast to prior post-hoc multi-bit schemes that depend on thousands to millions of training samples.
  \item An \textbf{evaluation on 1,750 files} across seven datasets and 17 attack types, with head-to-head baseline comparisons and a formal proof of the information-theoretic barrier under LLM regeneration.
\end{enumerate}

\section{Background}\label{sec:background}
\subsection{Code Watermarking}
Code watermarking embeds a hidden signal into source code that can later be detected to determine provenance~\cite{collberg2002watermarking}. Unlike natural language watermarking~\cite{kirchenbauer2023watermark}, code watermarking must preserve both semantics (the code produces the same output) and syntactic validity (the code compiles correctly). A single misplaced token can cause a compilation error or a change in the program's behavior.

Existing schemes fall along two axes, namely \emph{when} the watermark is embedded (during generation versus post-hoc) and \emph{what} is embedded (binary presence detection versus multi-bit message). 

Generation-time watermarking modifies the LLM's token sampling process. KGW~\cite{kirchenbauer2023watermark} splits the vocabulary into green and red lists using a hash of the previous token, then adds a bias $\delta$ to green token logits. Detection counts green tokens and computes a $z$-score. SWEET~\cite{lee2024wrote} improves on KGW by applying the bias only to high-entropy tokens, preserving code correctness at low-entropy positions, such as keywords and syntax. STONE~\cite{kim2025marking} further refines this by excluding syntax-critical tokens entirely. All three require access to the LLM's internals and carry zero payload, hence, are only suitable for binary classification (\ie LLM-generated or not).

Post-hoc watermarking techniques alter code after generation, making it applicable to any LLM. ACW~\cite{li2025efficient} applies 45 idempotent code transformations (\eg refactoring, reordering, formatting). Detection re-applies each transformation and checks whether the code changes. If the code is unchanged, the transformation was already applied, and the watermark is considered present. SrcMarker~\cite{yang2024srcmarker} trains a bidirectional gated recurrent unit (BiGRU) encoder with a paired decoder over variable-name transformations and abstract syntax tree (AST) rewrites to embed a 4-bit signature. RoSeMary~\cite{zhang2025robust} trains a CodeT5-based neural encoder~\cite{wang2021codet5} end-to-end to embed a 4-bit signature across variable names and syntactic transformations, with zero-knowledge wrappers around verification.


\subsection{Reed-Solomon Error Correction}
RS codes~\cite{reed1960polynomial} are a family of algebraic error-correcting codes widely used in storage, communication, and QR codes. An RS code with parameters $(n, k, t)$ over $\text{GF}(q)$ encodes $k$ message symbols into $n {=} k {+} 2t$ codeword symbols and corrects any $e$ erasures and $s$ errors satisfying $e {+} 2s {\leq} 2t$. With a Vandermonde parity-check matrix, the code is \emph{maximum-distance separable (MDS)}, and any erasure pattern of size up to $2t$ is provably recoverable.

Our system uses two complementary codes. A random binary linear code~\cite{shannon1948mathematical} serves as the primary code for all main evaluations. A standard RS code over $\text{GF}(16)$ serves as a theoretical baseline with MDS guarantees and is evaluated in the Appendix~\ref{sec:appendix_baselines} on files large enough to carry its longer codeword. Both codes parametrize a capacity-robustness tradeoff through the RS parameter $t$, where a higher $t$ corrects more erasures but requires a longer codeword and therefore larger files.

\BfPara{Compact Code for Short Files} Our primary evaluation uses a systematic binary linear code with $4t$ parity-check bits derived from a fixed SHA-256~\cite{nist2015sha} seed. The parity matrix is random by construction (no Vandermonde structure), so the code is \emph{not} MDS but still decodes with high probability for moderate erasure counts. Specifically, $e {\leq} 2t {-} 1$ erasures recover with probability above 0.99, and $e {=} 2t$ erasures with probability approximately 0.30 (Theorem~\ref{thm:compact}). The codeword fits even the smallest files in $D_{32}$ at $t{=}2$ while keeping the same decoding interface across all datasets.

\BfPara{GF(16) RS Baseline} For theoretical comparison, we evaluate a baseline, which is a standard RS code over $\text{GF}(16)$, whose 4-bit symbols align naturally with our watermark layout, namely 2-bit variable-naming slots and 1-bit structural slots. GF(16) arithmetic operates over $\text{GF}(2)[x]/(x^4{+}x{+}1)$ with primitive element $\alpha {=} x$, and the 24-bit payload encodes as 6 GF(16) symbols. The GF(16) code requires $D_{40}$ or larger files (Table~\ref{tab:rs_params}); Appendix~\ref{sec:appendix_baselines} reports its full per-dataset comparison.

\begin{table}[t]
    \centering
    \begin{tabular}{c cc cc c}
    \toprule
     & \multicolumn{2}{c}{Compact (primary)} & \multicolumn{2}{c}{GF(16) RS (baseline)} &  \\
    \cmidrule(lr){2-3}\cmidrule(lr){4-5}
    $t$ & CW bits & Min.\ bits & CW bits & Min.\ bits & Erasures \\
    \midrule
    2 & 32 & 32 & 40 & 40 & $\leq 4$ \\
    4 & 40 & 40 & 56 & 56 & $\leq 8$ \\
    8 & 56 & 56 & 88 & 88 & $\leq 16$ \\
    \bottomrule
    \end{tabular}
    \caption{Codeword parameters for the compact code (primary) and the GF(16) RS baseline (CW = codeword).}
    \vspace{-5mm}
    \label{tab:rs_params}
\end{table}

\subsection{Spread-Spectrum Watermarking}
Spread-spectrum techniques, originally from wireless communications~\cite{cox1997secure}, distribute a narrowband signal across a wide frequency band, making the signal robust against narrowband interference. We adapt this principle to code watermarking by spreading the codeword across all available sites. Without spreading, each codeword bit depends on one site and a single modification flips it; with spreading, every site contributes to multiple bits.

A keyed pseudo-random permutation assigns each site to a codeword bit position. The key generation is implementation-specific; we use a permutation seeded by a point sequence on the secp256k1 elliptic curve, detailed in Section~\ref{sec:design}. With $N$ sites and $B$ codeword bits, each bit receives approximately $v {=} \lfloor N/B \rfloor$ votes. For example, $D_{56}$ at $t{=}2$ yields $v {\approx} 1.31$ votes per bit ($N {\approx} 42$, $B {=} 32$); $D_{152}$ yields $v {\approx} 4.29$ from $N {\approx} 137$.

This design provides two layers of error correction that operate independently.
\begin{enumerate}
  \item \textbf{Majority voting} forms the first layer. If a codeword bit receives $v$ votes and fewer than $\lfloor v/2 \rfloor$ are corrupted, the majority vote is correct, and no RS correction is needed for this position.
  \item \textbf{RS decoding} forms the second layer. For positions where the majority vote fails, RS correction handles up to $2t$ such positions across the entire codeword.
\end{enumerate}

The two layers address complementary threats. The spread-spectrum layer absorbs randomly distributed corruption from individual site modifications, while the RS layer handles concentrated corruption, such as an attacker targeting an entire channel.

\section{Threat Model and Design Goals}\label{sec:threat}
\subsection{Deployment Model}
We consider a \emph{provider-as-verifier} deployment model, standard across all keyed watermarking systems~\cite{kirchenbauer2023watermark,lee2024wrote,kim2025marking,qu2025provably}:
\begin{enumerate}
  \item An LLM provider generates code and embeds a watermark containing the model identifier (\eg ``gpt4'') using a secret key.
  \item The watermarked code is distributed to users.
  \item A user may modify the code (\eg renaming variables, reformatting, refactoring).
  \item A verifier (\eg academic institution, employer, court) submits the code to a detection service.
  \item The service provider tests it against all registered LLM provider keys and returns the result (``Generated by GPT-4'' or ``no watermark'').
\end{enumerate}

Detection requires the secret key $K$, preventing users from running detection themselves. With detection taking $\approx 40$ms per key and a small number of major LLM providers, exhaustive checking completes in under 1 second, analogous to how YouTube's Content ID checks uploaded videos against all registered content~\cite{soha2016monetizing}.

\subsection{Attacker Model}
We assume the adversary does not possess the secret key $K$, which the provider-as-verifier deployment model keeps private. The adversary, therefore, knows the watermarking scheme and channel types but cannot determine which sites map to which codeword bit positions.

We define a four-level attack taxonomy based on the type of code transformation. Our system provides formal guarantees for Levels 1--3. Level 4 is an information-theoretic barrier for \emph{all} post-hoc syntactic watermarks (Section~\ref{sec:llm_barrier}).

\begin{itemize}
  \item \textbf{Level 1: Formatting.} Whitespace, comments, blank lines, and import ordering. These changes do not affect the AST and therefore corrupt zero watermark sites (Theorem~\ref{thm:level1}).
  \item \textbf{Level 2: Syntactic.} Variable renaming, dead code insertion, and statement reordering, which are classical code-transformation operations~\cite{collberg2002watermarking}. Each variable rename corrupts at most 2 bits via the naming convention. Dead code insertion may shift AST-order site indices.
  \item \textbf{Level 3: Structural.} Operator direction swapping (\texttt{a<b} $\leftrightarrow$ \texttt{b>a}), operand reordering (\texttt{a+b} $\leftrightarrow$ \texttt{b+a}), and expression-form changes (ternary $\leftrightarrow$ if/else, list comprehension $\leftrightarrow$ for-loop). Each channel attack corrupts all sites of that channel.
  \item \textbf{Level 4: Algorithmic.} Complete code regeneration by another LLM. All syntactic features are independently regenerated from the program's semantic content.
\end{itemize}

\subsection{The Detection Dilemma}\label{sec:oracle}
A fundamental tension exists in all watermarking systems between verifiability and security. Any detection mechanism that returns a binary answer can be queried as an oracle for iterative removal: an adversary modifies the code incrementally, queries the detector after each modification, and continues until detection fails. For code files with 10 to 30 modifiable sites, this requires only $O(N)$ queries.

The dilemma is universal across prior watermarking work and remains unresolved. KGW~\cite{kirchenbauer2023watermark} acknowledges the risk and proposes access monitoring without formalization. Qu~\etal~\cite{qu2025provably} prove security under the Random Oracle Model~\cite{bellare1993random} against blind, non-adaptive attacks only. RoSeMary~\cite{zhang2025robust} wraps verification in zero-knowledge proofs (ZKP), but the binary outcome is itself an oracle signal regardless of whether the key is hidden. Other schemes~\cite{lee2024wrote,kim2025marking,guan2024codeip,li2025efficient} do not address the risk.
We adopt the provider-as-verifier model to prevent user-side oracle queries by default. Practical mitigations via rate limiting and query deduplication are discussed in Section~\ref{sec:discussion}.

\subsection{Design Goals}
\label{sec:goals}

The threat model above motivates five concrete goals that the system must satisfy together in order to be practical.

\BfPara{G1: Multi-bit Attribution} The watermark must carry enough payload to distinguish among the model identifiers deployed in practice today. We target 24 bits, sufficient for $2^{24} {\approx} 1.7 {\times} 10^7$ provider-model pairs, well beyond the 4-bit (16 IDs) ceiling of prior post-hoc neural schemes~\cite{yang2024srcmarker,zhang2025robust}.

\BfPara{G2: Robustness Under Attack} Detection must survive every Level 1 attack deterministically (Theorem~\ref{thm:level1}) and tolerate Level 2 and Level 3 attacks within the recovery bounds of the spread-spectrum voting and RS correction layers (Theorems~\ref{thm:rename} and ~\ref{thm:struct}).

\BfPara{G3: Semantic Preservation} Every embedding transformation must preserve program behavior on all inputs, and the watermarked code must still compile without modification.

\BfPara{G4: No Training or Model Access} The provider must embed and verify without model access, training data, or GPU resources. This rules out neural encoders trained end-to-end and any scheme that intercepts the LLM's decoding process.

\BfPara{G5: Low False-Positive Rate} The detector must reject unrelated code with vanishingly small probability. We target a provable bound of $1/2^{24}$, matching the payload's information-theoretic floor.


\section{System Design} \label{sec:design}

\subsection{Overview}
\begin{figure}[t]
  \centering
  \includegraphics[width=\columnwidth]{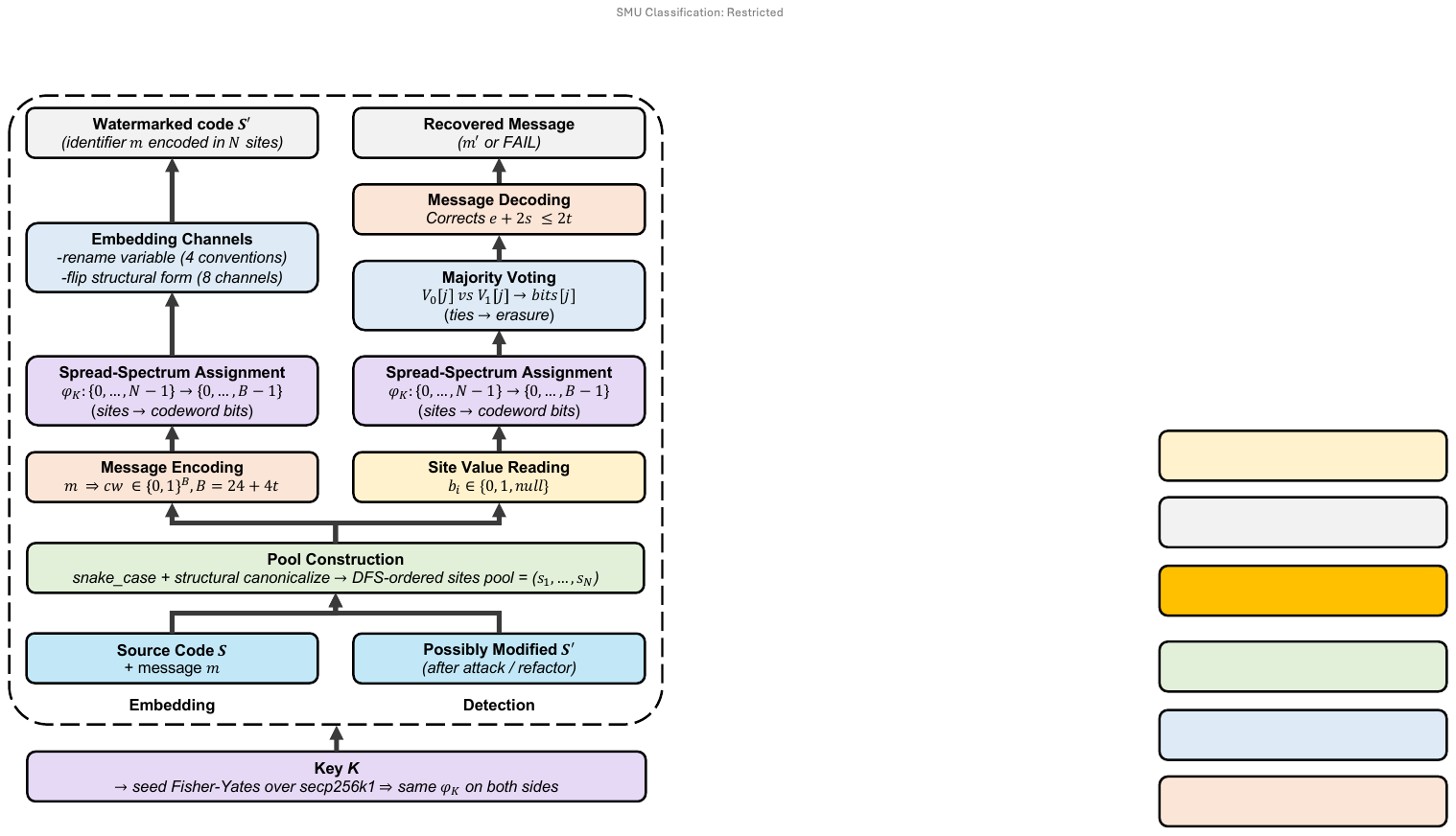}
  \caption{System overview of the embedding (left) and detection (right) pipelines.}
  \label{fig:overview}
\end{figure}
Figure~\ref{fig:overview} illustrates the two pipelines, \emph{embedding} and \emph{detection}. Both share identical normalization and pool construction steps, ensuring that the same source code always produces the same site ordering, which is critical for correct detection.

\subsection{Message Encoding}
The watermark payload is a model identifier string, encoded with a 6-bit compact alphabet of 64 characters, namely \texttt{a--z} (0--25), \texttt{A--Z} (26--51), \texttt{0--9} (52--61), \texttt{\_} (62), and null padding (63). This alphabet covers every plausible model identifier (\eg ``gpt4'', ``dsk2''), since deployed model names consist of alphanumeric characters and underscores. A 4-character identifier therefore produces up to $4 {\times} 6 {=} 24$ message bits. The compact code appends $4t$ parity bits to yield a $(24{+}4t)$-bit codeword, whereas the $\text{GF}(16)$ baseline encodes the same 24 bits as 6 four-bit symbols (Section~\ref{sec:background}). For example, ``gpt4'' encodes as $g{=}6$, $p{=}15$, $t{=}19$, $4{=}56$, giving the bit string \texttt{000110 001111 010011 111000}. Compared with 8-bit ASCII, this encoding cuts the minimum capacity requirement by 25\%, from 32 bits to 24 bits, enabling watermarking of shorter code files without sacrificing coverage of any deployed model name.

\subsection{Embedding Channels}
The watermark spreads bits across two complementary channel families, a variable naming channel and eight structural channels.

\BfPara{Variable Naming} Each multi-word variable carries 2 bits via the choice among four naming conventions, namely \texttt{snake\_case} (bits \texttt{00}), \texttt{camelCase} (bits \texttt{01}), \texttt{PascalCase} (bits \texttt{10}), and \texttt{ALL\_CAPS} (bits \texttt{11}). The pool construction step, described below, normalizes all variables to \texttt{snake\_case} and extends single-word identifiers to multi-word forms so they can carry a detectable convention. The embedding algorithm then renames each variable to the convention determined by its codeword bit assignment.

\begin{table}[t]
    \centering
    \begin{tabular}{@{}clll@{}}
    \toprule
    \# & Channel & Canonical (bit=0) & Alternative (bit=1) \\
    \midrule
    1 & aug\_assign    & \texttt{x = x + 1}  & \texttt{x += 1} \\
    2 & compare\_dir   & \texttt{a < b}      & \texttt{b > a} \\
    3 & equality\_dir  & \texttt{y == x}     & \texttt{x == y} \\
    4 & list\_comp     & for + append        & comprehension \\
    5 & ternary        & if/else block       & \texttt{x if c else y} \\
    6 & operand\_order & \texttt{a + b}      & \texttt{b + a} \\
    7 & empty\_coll    & \texttt{[]}         & \texttt{list()} \\
    8 & none\_check    & \texttt{== None}    & \texttt{is None} \\
    \bottomrule
    \end{tabular}
    \caption{The eight structural channels for Python.}
    \label{tab:channels}
    \vspace{-4mm}
\end{table}

\BfPara{Structural Channels} Each of the eight structural channels (listed in Table~\ref{tab:channels}) exploits a pair of semantically equivalent syntactic forms, and the watermark encodes one bit per site by choosing between the canonical form (bit 0) and the alternative form (bit 1). All transformations are semantic-preserving, since both forms produce identical program output. The channels are also independent, so attacking one channel (\eg normalizing all comparisons to the canonical `\texttt{<}' form) does not affect other channels (\eg \texttt{augmented assignment} or \texttt{operand order}). Channels vary in prevalence, with \texttt{operand\_order} contributing the most sites per file (8.1 on average in $D_{56}$, 37\% of structural capacity) and \texttt{ternary} and \texttt{none\_check} the fewest (0.2 and 0.3 sites respectively).

\begin{algorithm}[h]
    \caption{Watermark Embedding}
    \label{alg:embed}
    \begin{algorithmic}[1]
    \REQUIRE Source code $S$, message $m$, key $K$, RS parameter $t$
    \ENSURE Watermarked code $S'$
    \STATE $\text{canon} \leftarrow \text{Normalize}(S)$ \COMMENT{snake\_case + structural}
    \STATE $\text{pool} \leftarrow \text{ConstructPool}(\text{canon})$
    \STATE $N \leftarrow |\text{pool}|$
    \STATE $\text{cw} \leftarrow \text{RS\_Encode}(m, t)$ \COMMENT{$B = 24 + 4t$ bits}
    \STATE $\phi \leftarrow \text{SpreadSpectrum}(K, N, B)$ \COMMENT{keyed permutation}
    \FOR{$i = 0$ to $N-1$}
      \STATE Read target bits for site $i$ from $\text{cw}$ via $\phi$ \COMMENT{variables use 2 consecutive $\phi$ slots}
      \STATE Transform $\text{pool}[i]$ to encode those bits
    \ENDFOR
    \RETURN watermarked source $S'$
    \end{algorithmic}
\end{algorithm}

\subsection{Pool Construction}\label{sec:pool-construction}
The site pool determines which embeddable sites carry watermark bits. Pool construction must be \emph{deterministic}, such that the same source code and the same key always yield the same pool in the same order. The process proceeds in five steps.

\begin{enumerate}
  \item \textbf{Normalize variable names} to \texttt{snake\_case} (canonical form), so that variables are identified consistently regardless of the original naming style.
  \item \textbf{Extend single-word variables} with context-inferred suffixes (\eg \texttt{n} {$\rightarrow$} \texttt{n\_count}, \texttt{s} {$\rightarrow$} \texttt{s\_str}) so they become multi-word and can carry a detectable naming convention; single-character variables such as \texttt{`i'} cannot, since \texttt{snake\_case} and \texttt{camelCase} are indistinguishable for single words. AST-validity tests (Section~\ref{sec:eval}) confirm that the renamed code remains functionally correct.
  \item \textbf{Normalize structural channels} to canonical forms. All comparisons are normalized to \texttt{<}/\texttt{<=} (flipping \texttt{>}/\texttt{>=}), commutative operands are sorted by type, and equality operands are ordered by complexity.
  \item \textbf{Extract embeddable sites} in deterministic depth-first search order, naming-convention variables first, then structural sites.
  \item \textbf{Construct the ordered pool} as a list of (site, bit-width) entries, where variables contribute 2 bits each and structural sites contribute 1 bit each.
\end{enumerate}

\subsection{Spread-Spectrum Bit Assignment}
With $N$ sites and a $B$-bit codeword, the key $K$ yields an assignment $\phi_K {:} \{0,\ldots,N{-}1\} {\to} \{0,\ldots,B{-}1\}$ via a Fisher-Yates shuffle~\cite{knuth1997art} seeded by a point sequence on the secp256k1 elliptic curve~\cite{miller1986use, secg2010sec2}. The $x$-coordinates of the scalar-multiplied sequence $K{\cdot}G, 2K{\cdot}G, \ldots$ on secp256k1 drive the Fisher-Yates swap indices.

The assignment operates in full permutation cycles, where each cycle of $B$ assignments covers all codeword bit positions exactly once. With $N$ sites, the number of full cycles is $\lfloor N/B \rfloor$, and each codeword bit receives between $\lfloor N/B \rfloor$ and $\lceil N/B \rceil$ votes. This uniform distribution is critical because it ensures no single codeword bit position is under-represented.

\subsection{Embedding and Detection Algorithms}
We now present both pipelines as pseudocode. The embedding pipeline (Algorithm~\ref{alg:embed}) takes the source code, message, key, and RS parameter $t$, and produces watermarked code with the identifier encoded across $N$ sites. Normalization and pool construction establish a deterministic site ordering (Section~\ref{sec:pool-construction}), the RS encoder produces a codeword of $B {=} 24 {+} 4t$ bits, and the spread-spectrum assignment $\phi_K$ maps each site to a codeword bit position. Each site is then transformed to encode its target bit assignment.

The detection pipeline (Algorithm~\ref{alg:detect}) reverses this process. Given possibly modified code, detection reconstructs the same pool using identical normalization, which is critical for correctness. Each site contributes a vote to its assigned codeword bit position. A position with a strict majority of $V_0$ or $V_1$ yields the corresponding bit; ties produce erasures that the RS decoder corrects. If decoding succeeds, the recovered message is returned; otherwise, it outputs FAIL.

\begin{algorithm}[t]
    \caption{Watermark Detection}
    \label{alg:detect}
    \begin{algorithmic}[1]
    \REQUIRE Possibly modified code $S'$, key $K$, RS parameter $t$
    \ENSURE Recovered message $m'$ or FAIL
    \STATE $\text{canon} \leftarrow \text{Normalize}(S')$
    \STATE $\text{pool} \leftarrow \text{ConstructPool}(\text{canon})$
    \STATE $N \leftarrow |\text{pool}|$, $B \leftarrow 24 + 4t$
    \STATE $\phi \leftarrow \text{SpreadSpectrum}(K, N, B)$
    \STATE Initialize vote counters $V_0[0..B{-}1] \leftarrow 0$, $V_1[0..B{-}1] \leftarrow 0$
    \FOR{$i = 0$ to $N-1$}
      \STATE $b \leftarrow \text{ReadSiteValue}(\text{pool}[i], S')$
      \IF{$b \neq \text{null}$}
        \STATE $V_b[\phi(i)] \leftarrow V_b[\phi(i)] + 1$
      \ENDIF
    \ENDFOR
    \FOR{$j = 0$ to $B-1$}
      \IF{$V_0[j] > V_1[j]$}
        \STATE $\text{bits}[j] \leftarrow 0$
      \ELSIF{$V_1[j] > V_0[j]$}
        \STATE $\text{bits}[j] \leftarrow 1$
      \ELSE
        \STATE $\text{bits}[j] \leftarrow \text{ERASURE}$
      \ENDIF
    \ENDFOR
    \STATE $m' \leftarrow \text{RS\_Decode}(\text{bits}, t)$
    \RETURN $m'$ if decoding succeeds, else FAIL
    \end{algorithmic}
\end{algorithm}

\subsection{Concrete Example}
We walk through the mechanism on a typical $D_{56}$ file. Such a file provides about 21 multi-word variables and 22 structural sites. Variables contribute 2 votes each (one per naming-convention bit, occupying two consecutive $\phi_K$ slots as in Algorithm~\ref{alg:embed}) and structural sites contribute 1 vote each, giving $N {=} 21 {\times} 2 {+} 22 {=} 64$ total votes. With $t{=}2$, the codeword has $B {=} 32$ bits, so the spread-spectrum assignment $\phi_K$ averages $v {=} N/B {=} 2.0$ votes per bit position. 
The per-site $V$/bit column reported in Appendix~\ref{sec:appendix_extra} (Table~\ref{tab:crossdataset}) counts one vote per site rather than one per bit and is therefore roughly half of $v$ as used.

Under embedding, $\phi_K$ maps each of the 64 vote slots to one codeword bit position, and the corresponding site is transformed to match its assigned bit. Under detection, the same 64 sites vote on the same 32 positions. An attack that flips 3 structural sites corrupts 3 of the 64 vote slots. The spread-spectrum assignment distributes these wrong votes pseudo-randomly, so each lands at a distinct bit position with high probability. Each affected position has one correct vote and one wrong vote, producing a tie that the detector flags as an erasure. The RS decoder corrects up to $2t{=}4$ erasures, so the three erasures fall within the correction bound, and the identifier is recovered exactly.

Figure~\ref{fig:voting} illustrates the same mechanism on a scaled toy with $N {=} 21$ sites and $B {=} 7$ codeword bits for visibility, where each bit receives $v {=} 3$ votes. Sites 0--9 are variable slots (blue), sites 10--20 are structural sites (orange), and sites 11, 15, and 18 are attacker-flipped (red). Tallies below each codeword bit are written $V_1 {\cdot} 1 {\mid} V_0 {\cdot} 0$, showing the number of votes for each value. With three votes per bit, each flipped vote is outvoted 2-to-1 at its assigned codeword bit, so majority voting recovers every bit directly ($e {=} s {=} 0$), and RS never needs to engage. On production files with $v {\approx} 2$, the detector instead sees ties that become erasures, which RS resolves as described.

\begin{figure}[t]
  \centering
  \includegraphics[width=\columnwidth]{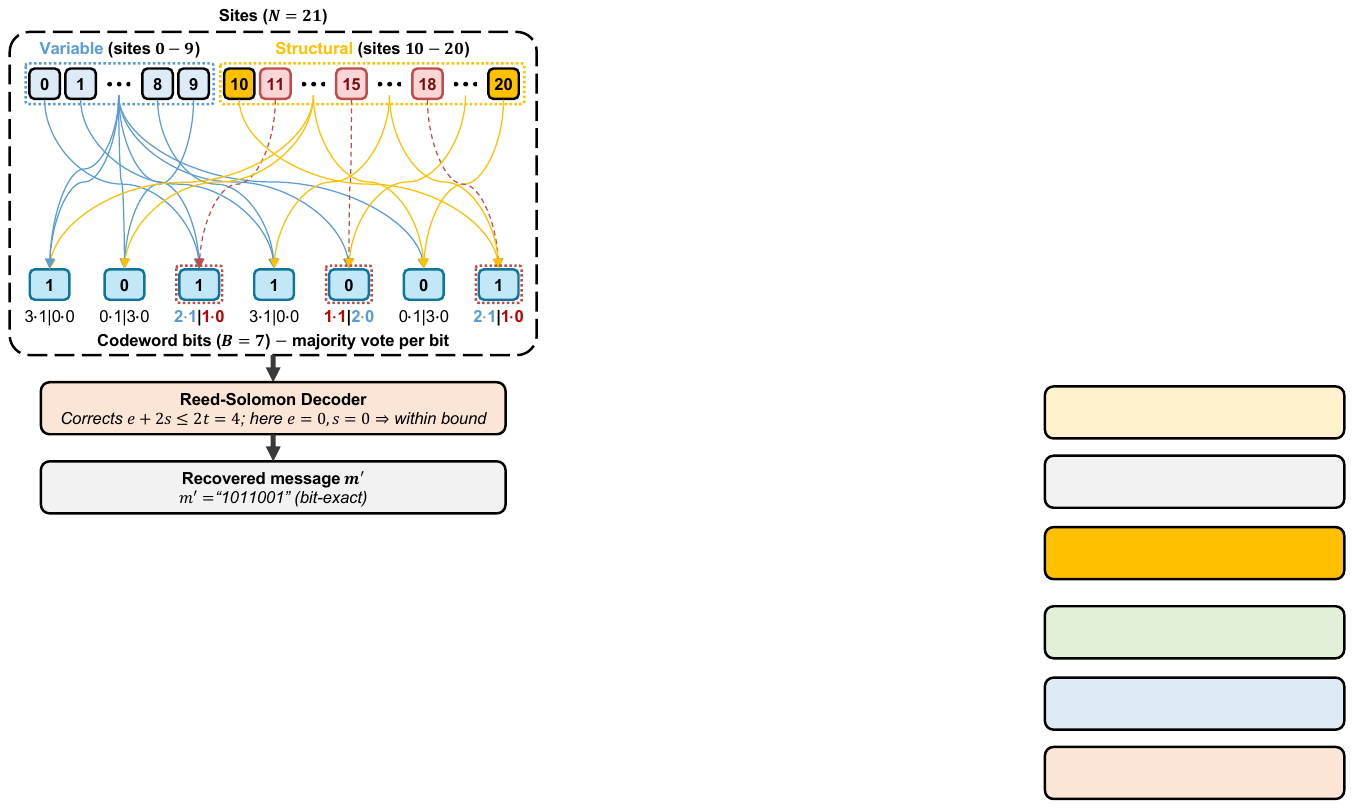}
  \caption{Spread-spectrum voting example. Three attacker-flipped sites are each outvoted 2-to-1 by correct votes, and RS decoding recovers the message without correction.}
  \label{fig:voting}
  \vspace{-2mm}
\end{figure}

\section{Formal Analysis}\label{sec:analysis}
We provide robustness guarantees for Levels 1--3 of our attack taxonomy (Section~\ref{sec:threat}). Our main evaluation uses the compact binary code, whose recovery guarantee is given by Theorem~\ref{thm:compact}. The $\text{GF}(16)$ RS baseline, reported in the Appendix~\ref{sec:appendix_baselines}, satisfies the stronger MDS guarantee of Theorem~\ref{thm:rs}.

\begin{theorem}[GF(16) RS: MDS Guarantee]\label{thm:rs}
For the $\text{GF}(16)$ RS encoding, any received codeword with $e$ symbol erasures and $s$ symbol errors satisfying $e {+} 2s {\leq} 2t$ is decoded to the original message exactly.
\end{theorem}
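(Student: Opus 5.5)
The plan is to establish that the GF(16) code is maximum-distance separable, i.e.\ that its minimum Hamming distance over symbols is $d = n-k+1 = 2t+1$, and then deduce unique decodability from a standard distance argument. I will take the code as an evaluation (equivalently, a Vandermonde parity-check) RS code of length $n = k+2t$ over $\mathrm{GF}(16)$ with $k=6$ message symbols. The evaluation points are distinct powers of the primitive element $\alpha = x$. Since $\alpha$ has order $15$, we need $n \le 15$; the parameters in Table~\ref{tab:rs_params} give $n = 22/4 \cdot \ldots$, i.e.\ $n = 10, 14, 22$ for $t = 2,4,8$. The case $t=8$ exceeds $15$, so there I will either assume a shortened or extended construction with distinct points, or restrict the claim to $n\le 16$. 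I flag this as a point to check against the implementation.

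\textbf{Step 1 (distance).} A nonzero codeword is the evaluation vector of a nonzero polynomial $f$ of degree at most $k-1$ at $n$ distinct points. Such an $f$ has at most $k-1$ roots, so the codeword has weight at least $n-k+1 = 2t+1$. The Singleton bound shows this is tight. Equivalently, in the parity-check view, any $2t$ columns of the $2t\times n$ Vandermonde matrix $H=(\alpha_j^{i})$ are linearly independent, because the corresponding square Vandermonde determinant $\prod_{j<l}(\alpha_l-\alpha_j)$ is nonzero.

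\textbf{Step 2 (uniqueness under erasures and errors).} Suppose the received word has $e$ erased coordinates and $s$ erroneous ones, with $e+2s\le 2t$. Puncture the code on the erased set $E$. The punctured code has minimum distance at least $2t+1-e \ge 2s+1$. Now suppose two codewords $c \ne c'$ both agree with the received word outside $E$ except on at most $s$ positions each. Then $c$ and $c'$ differ, off $E$, in at most $2s$ positions, which contradicts the punctured distance bound. Hence the transmitted codeword is the unique codeword within distance $s$ of the received word on the non-erased coordinates. Because the encoder is injective (the map $f \mapsto$ its evaluations is injective when $n\ge k$), the message is recovered exactly.

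\textbf{Step 3 (algorithmic decoding).} To show that the decoder actually finds this unique codeword, and not merely that it exists, I would cite an errors-and-erasures decoder such as Berlekamp--Welch or Berlekamp--Massey with the Forney algorithm and erasure locator. Each provably returns the unique solution whenever $e+2s\le 2t$. I expect this step to be the main obstacle if it has to be proved from scratch. I would instead argue via Berlekamp--Welch on the punctured code: solve for an error locator $E(x)$ of degree $s$ and a polynomial $Q = fE$ of degree below $k+s$ satisfying $Q(\alpha_j) = y_j E(\alpha_j)$ on the $n-e$ surviving points. A solution exists by construction. Any two solutions yield the same ratio $Q/E$, because $Q_1E_2 - Q_2E_1$ vanishes at at least $n-e-2s \ge k+2s$ points while having degree below $k+2s$, so it is identically zero. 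Finally, I would note that the bits-to-symbols mapping is a bijection, so symbol-level recovery gives the 24-bit payload exactly.
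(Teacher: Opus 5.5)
Your Steps 1--2 follow the paper's own route: Vandermonde parity-check matrix, hence MDS with $d=2t+1$, hence unique errors-and-erasures decoding when $e+2s\le 2t$. In two places you are more careful than the paper. First, you require only that every $2t$ columns of $H$ be independent. The paper's stronger claim that every $r\times r$ submatrix is invertible is false: rows $j=0,3$ and columns $i=0,5$ of $H_{ji}=\alpha^{i(j+1)}$ are two equal rows $(1,\alpha^5)$, because $\alpha^{15}=1$. Second, you notice that the argument needs distinct nodes. Your flag on $t=8$ is a genuine defect in the statement, not just something to check against the implementation, and your first proposed repair cannot work. $\mathrm{GF}(16)$ supplies at most $16$ affine nodes ($17$ with the point at infinity), so no shortened or extended RS code reaches length $22$. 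More fundamentally, any MDS code over $\mathrm{GF}(q)$ with $k\ge 2$ has $d\le q$, so a $[22,6,17]$ code over $\mathrm{GF}(16)$ does not exist at all. Concretely, with the paper's $H$ and $n=22$, columns $i$ and $i+15$ coincide, so erasing just those two positions determines only the sum of the two erased symbols. The only defensible version of the theorem is the restricted one, $t\in\{2,4\}$ (so $n=10,14\le 15$), and there your Steps 1--2 are a complete proof.

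Step 3 contains a counting slip. The inequality $n-e-2s\ge k+2s$ is equivalent to $e+4s\le 2t$, which does not follow from $e+2s\le 2t$ (take $e=0$, $s=t$). No subtraction is needed. From $Q_1(\alpha_j)=y_jE_1(\alpha_j)$ and $Q_2(\alpha_j)=y_jE_2(\alpha_j)$, the polynomial $Q_1E_2-Q_2E_1$ vanishes at every one of the $n-e$ surviving points, and $n-e=k+2t-e\ge k+2s$ exceeds its degree. The count $n-e-s$ is the right one for $Q-fE$, which may be what you had in mind. Also, the decoder does not know $s$, so take $E$ monic of degree $\lfloor (2t-e)/2\rfloor$; the same count still goes through. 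With these fixes, Step 3 gives an explicit linear-algebra decoder, which is a concrete way to make good on the paper's unexplained appeal to Gaussian elimination.
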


\begin{proof}
The code is an RS code over $\text{GF}(2^4)$ with $k{=}6$ message symbols and $n {=} k{+}2t$ codeword symbols. The parity-check matrix is $H_{ji} {=} \alpha^{i(j+1)}$ for $i {\in} [0,n)$, $j {\in} [0,2t)$, where $\alpha$ is a primitive element. By the Vandermonde property, every $r {\times} r$ submatrix of $H$ for $r {\leq} 2t$ is invertible over $\text{GF}(16)$, establishing the MDS property with minimum distance $d {=} 2t{+}1$. The combined bound $e {+} 2s {\leq} 2t$ is the standard unique-decoding condition for MDS codes, achieved by Gaussian elimination over $\text{GF}(16)$.
\end{proof}

\begin{theorem}[Compact Code: Probabilistic Guarantee]\label{thm:compact}
For the compact binary code, the decoder recovers the message when the $4t {\times} e$ submatrix of $H$ indexed by the $e$ erased bit positions has full column rank. For a uniformly random $H$ and erasure pattern of weight $e$, this probability is $\prod_{i=1}^{4t{-}e{+}1}(1{-}2^{-i})$. Empirically, the decoder recovers ${>}99\%$ of cases for $e {\leq} 2t{-}1$ and ${\approx}30\%$ for $e {=} 2t$.
\end{theorem}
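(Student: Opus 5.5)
The plan has three parts, in the order of the statement: a deterministic decoding criterion, a counting argument for the probability, and a hand-off to experiment for the empirical rates.

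\textbf{Step 1: the full-column-rank criterion.} The compact code is a systematic binary linear code with parity-check matrix $H\in\mathrm{GF}(2)^{4t\times B}$, $B=24+4t$, so every codeword $c$ satisfies $Hc=0$. Let $E$ be the set of $e$ erased positions and $\bar E$ its complement. I will split the check equation as
\[
H_E c_E \;=\; H_{\bar E} c_{\bar E},
\]
where the right-hand side is known to the decoder once the $\bar E$ bits are read (we assume no errors, only erasures). The true $c_E$ is one solution. It is the unique solution exactly when $\ker H_E=\{0\}$, that is, when $H_E$ has full column rank $e$. In that case Gaussian elimination over $\mathrm{GF}(2)$ recovers $c_E$, and systematic encoding then gives the message bits directly. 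If the rank is deficient, two distinct codewords agree on $\bar E$, so no decoder can distinguish them. This proves the first sentence of the statement, and it also forces $e\le 4t$.

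\textbf{Step 2: the probability for random $H$.} Model the entries of $H_E$ as i.i.d.\ uniform bits. Adding columns one at a time, column $i+1$ must avoid the span of the first $i$ independent columns, which contains $2^i$ of the $2^{4t}$ vectors. This gives
\[
\Pr[\operatorname{rank} H_E=e]=\prod_{i=0}^{e-1}\bigl(1-2^{i-4t}\bigr)=\prod_{j=4t-e+1}^{4t}\bigl(1-2^{-j}\bigr).
\]

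\textbf{Step 3: reconciling with the stated product (the main obstacle).} The statement writes $\prod_{i=1}^{4t-e+1}(1-2^{-i})$, which is not the expression Step 2 produces. Numerically, at $t=2$, $e=4$ the stated product is $\prod_{i=1}^{5}(1-2^{-i})\approx 0.30$, which matches the quoted empirical rate. The fully random model of Step 2 instead gives about $0.95$.

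I would first check whether the systematic structure $H=[P\mid I_{4t}]$ changes the count. Erasures that fall on parity columns contribute unit vectors, so only the erased message columns of $P$ are random. Repeating the Step 2 count on that reduced matrix should give a product of the same shape as Step 2's, with a shifted index range. If even that does not produce the stated form, I would flag the discrepancy explicitly. The proof would then state the correct formula from Step 2 and attribute the $\approx 30\%$ at $e=2t$ to something the fixed SHA-256-seeded $P$ does that a uniform model does not capture, such as correlated or low-weight columns.

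\textbf{Step 4: the empirical rates.} The two empirical claims, ${>}99\%$ recovery for $e\le 2t-1$ and ${\approx}30\%$ at $e=2t$, are not mathematical consequences of Steps 1--3. I will present them as measurements: enumerate or sample erasure patterns $E$ of each weight against the fixed seeded $H$, and test $\operatorname{rank} H_E=e$ directly using the criterion from Step 1.
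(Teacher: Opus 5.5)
\textbf{Overall.} Your proposal is correct, and it is more careful than the paper's own proof. The middle sentence of the statement is false, as you suspected, so it cannot be proved.

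\textbf{Agreement with the paper.} Your Step 1 is the paper's argument. Its sketch also sets up the syndrome system $H_E x_E=b$ over $\mathrm{GF}(2)$ and notes that the solution is unique iff $H_E$ has full column rank. Your remark that rank deficiency leaves two codewords, hence two messages, consistent with the received bits is a small addition. Beyond that the paper proves nothing. It asserts $\prod_{i=1}^{4t-e+1}(1-2^{-i})$ ``as stated'' and cites Table~\ref{tab:crossdataset}, which reports attack accuracies rather than decoding rates by erasure weight.

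\textbf{Step 3 can be made decisive without the $e=2t$ numerics.}
\begin{itemize}
\item The stated product drops a factor below $1$ each time $e$ grows, so it increases with $e$.
\item At $t=2$ it gives about $0.29$ for $e=1$. There a single erased bit is recoverable iff its column is nonzero, which has probability $1-2^{-8}$ for a uniform column.
\item It also gives about $0.29$ for $e=3$, contradicting the theorem's own ${>}99\%$ clause.
\end{itemize}
Your Step 2 count is the correct one: at $t=2$, $e=4$ the uniform value is $\prod_{j=5}^{8}(1-2^{-j})\approx 0.942$, not $0.95$. Now take $H=[P\mid I_{4t}]$ with $a$ of the $e$ erasures on message positions. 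The probability becomes $\prod_{j=4t-e+1}^{4t-e+a}(1-2^{-j})$, a truncation of your range that can only raise the value. So the systematic structure does not rescue the stated form either.

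\textbf{Neither empirical figure follows from the model.} The uniform model gives only about $0.97$ at $t=2$, $e=3$. Treating both figures as measurements, as in your Step 4, is the only defensible reading.

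\textbf{Drop the fallback that blames the seeded $P$.} For $t=2$, no fixed $8\times 32$ parity-check matrix can give ${>}99\%$ at $e=3$ and ${\approx}30\%$ at $e=4$ under uniformly placed erasures. A failing weight-$4$ pattern is of one of two kinds.
\begin{itemize}
\item It contains a failing triple. Each triple lies in $29$ quadruples, and $29\binom{32}{3}=4\binom{32}{4}$. So this kind covers under $4\%$ of quadruples.
\item It consists of four distinct columns summing to zero. Two different pairs with a common sum cannot share a column, since $u+v=u+w$ forces $v=w$. Hence each nonzero sum arises from at most $16$ of the $496$ pairs. By convexity there are at most $31\binom{16}{2}$ equal-sum pairs of pairs, and each zero-sum quadruple accounts for three of them. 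That leaves at most $1240$ of the $\binom{32}{4}=35960$ quadruples. The $e=3$ figure allows at most one coinciding pair of columns, which changes this negligibly.
\end{itemize}
So more than $90\%$ of weight-$4$ patterns decode. The reported ${\approx}30\%$ must therefore come from outside the pure-erasure model. Possible sources are majority-vote errors not flagged as erasures, non-uniform erasure placement, or a decoder that differs from the description. Your Step 4 rank test on the actual seeded $H$ would expose exactly this.
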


\begin{proof}[Proof sketch]
The decoder solves the linear system $H_E\, x_E {=} b$ over $\text{GF}(2)$, where $H_E$ is the $4t {\times} e$ submatrix of $H$ whose columns correspond to the erased positions, $x_E$ is the vector of unknown erased-bit values, and $b$ is the syndrome from the known bits. A unique solution exists iff $H_E$ has full column rank. For a uniformly random parity matrix $H$, this probability follows $\prod_{i=1}^{4t{-}e{+}1}(1{-}2^{-i})$ as stated, which we verify empirically (Appendix~\ref{sec:appendix_extra}, Table~\ref{tab:crossdataset}, $D_{32}$ column) at ${>}0.99$ for $e{\leq}3$ and ${\approx}0.30$ for $e {=} 2t {=} 4$.
\end{proof}

\begin{theorem}[Level 1 Invariance]\label{thm:level1}
Formatting changes (whitespace, comments, blank lines, import ordering) affect zero watermark sites. The watermark is perfectly preserved.
\end{theorem}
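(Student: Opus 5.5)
The approach is to show that the whole detection pipeline of Algorithm~\ref{alg:detect} factors through the abstract syntax tree (AST) of the input. A Level~1 transformation leaves that tree unchanged. Concretely, I would write detection as $\mathrm{Detect}(S',K,t) = D_K(\mathrm{AST}(S'))$. Then I would argue that if $\mathrm{AST}(S'') = \mathrm{AST}(S')$ (up to import order, handled separately), every intermediate object coincides. That means the canonical form, the pool, $N$, $\phi_K$, the votes $V_0,V_1$, the bits, and the decoded message are all the same.

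\textbf{Step 1: formatting edits do not change the AST.} Whitespace (outside indentation structure), comments, and blank lines are discarded by Python's tokenizer/parser. So $\mathrm{AST}(S'')=\mathrm{AST}(S')$ whenever $S''$ differs from $S'$ only by such edits. For indentation, I would restrict "whitespace" to changes that keep the program syntactically and semantically identical, so block structure is preserved.

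\textbf{Step 2: every site's reading depends only on the AST.} I would go through the channels one by one.
\begin{itemize}
\item Variable names are \texttt{Name}/\texttt{arg} node identifiers, and a naming convention is a function of the identifier string alone.
\item Each of the eight structural channels in Table~\ref{tab:channels} is decided by node types and child order. Examples are \texttt{AugAssign} versus \texttt{Assign}, the comparison operator and operand order, \texttt{ListComp} versus a loop, \texttt{IfExp} versus \texttt{If}, \texttt{List} versus \texttt{Call(list)}, and \texttt{Is} versus \texttt{Eq}.
\end{itemize}
Normalization and pool construction (Section~\ref{sec:pool-construction}) are likewise deterministic DFS traversals over the AST. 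Hence the pool order and $N$ are identical. Then $\phi_K$, which depends only on $K$, $N$, and $B$, is identical, so every vote and every majority bit is unchanged. With zero erasures and zero errors, the decoder (under Theorem~\ref{thm:rs} or Theorem~\ref{thm:compact}) returns the original message exactly.

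\textbf{Step 3: import ordering, the main obstacle.} Import reordering does change statement order in the AST, so Step~1 does not cover it directly. I would handle it in one of two ways. The first option is to note that no channel site lives inside an \texttt{Import}/\texttt{ImportFrom} node. Import aliases are not multi-word assigned variables in the pool, and imports are top-level statements with no comparisons or assignments. The DFS extraction therefore skips them, and their relative order cannot shift the indices of the remaining sites. If the implementation might count imported names as variables, the fallback is the second option: include sorting of imports in $\mathrm{Normalize}$, so the canonical form is invariant under import permutation. I expect making this argument airtight against the actual extraction order to be the delicate part. Given either option, the conclusion of the theorem follows: zero sites are affected and the watermark is preserved.
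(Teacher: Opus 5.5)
Your proposal is correct and follows the paper's argument: sites are extracted from the AST, normalization yields the same canonical form, and so the pool, $N$, $\phi_K$, the votes, and the decoded message are all unchanged. Your separate treatment of import ordering is more careful than the paper's proof, which simply asserts that the AST is invariant to import ordering even though reordering imports does permute statements in the module body. Your observation that no site lives in an \texttt{Import}/\texttt{ImportFrom} node (or, failing that, that $\mathrm{Normalize}$ must sort imports) is exactly what is needed to close that step.
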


\begin{proof}
All watermark sites are extracted from the AST, which is invariant to whitespace, comments, blank lines, and import ordering. The normalization step (Section~\ref{sec:design}) produces identical canonical forms regardless of formatting, so the pool and all site values are preserved.
\end{proof}

\begin{theorem}[Rename Robustness]\label{thm:rename}
Renaming $r {\leq} t$ variables corrupts at most $2r$ bit positions in the codeword. For the $\text{GF}(16)$ code, the RS decoder corrects all such corruptions exactly (Theorem~\ref{thm:rs}). For the compact code, correction succeeds with the probabilistic guarantee of Theorem~\ref{thm:compact}.
\end{theorem}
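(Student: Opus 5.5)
We first bound how much a single rename can disturb, and then hand the resulting corruption count to Theorem~\ref{thm:rs} or Theorem~\ref{thm:compact}. Three steps are needed.

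\textbf{Step 1: a rename touches exactly one 2-bit site.} By Section~\ref{sec:pool-construction}, pool construction normalizes every identifier to \texttt{snake\_case} before sites are extracted. Structural sites are then read from operator and expression forms, not from identifier spellings. So renaming a variable changes neither the set nor the ordering of structural sites. The one subtlety is that a rename could change a variable's position in the depth-first order, or change how \texttt{equality\_dir} or \texttt{operand\_order} rank operands when they are ordered by complexity or type. I would argue that ordering keys are computed on the canonical form and are independent of the surface name. Under that assumption the pool, and hence $\phi_K$, is unchanged. The only altered site value is then the naming convention of the renamed variable. That site occupies two consecutive $\phi_K$ slots (Algorithm~\ref{alg:embed}), so at most $2$ vote slots change per rename.

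\textbf{Step 2: corrupted slots become corrupted codeword positions.} Each corrupted vote slot feeds one codeword position, so $r$ renames perturb the votes at most $2r$ positions. Every other position keeps exactly its embedded votes and decodes correctly. At an affected position, majority voting (Algorithm~\ref{alg:detect}) gives one of three outcomes:
\begin{itemize}
\item the correct bit, if the correct votes still win;
\item an erasure, if the votes tie;
\item a wrong bit (an error), if the corrupted votes win.
\end{itemize}
So $e+s\le 2r$ in the worst case. If two slots of different renames map to the same position, that position is still counted once, so the bound holds regardless of collisions. This proves the first claim.

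\textbf{Step 3: apply the decoding theorems.} For the GF(16) code, Theorem~\ref{thm:rs} needs $e+2s\le 2t$. The gap is that $s$ errors cost double, so $2r\le 2t$ alone is not enough when $s>0$. I see two ways to close it.
\begin{itemize}
\item \emph{Symbol-level counting (tried first).} Positions outside the $2r$ affected bits are correct. The two bits of one variable are consecutive $\phi_K$ slots, and the intended argument is that they fall inside a single 4-bit symbol, so $r$ renames corrupt at most $r$ symbols. Then $e+2s\le 2r\le 2t$ whenever $r\le t$.
\item \emph{Erasure regime.} With $v\ge 2$ votes per position, a single flipped vote yields a tie, i.e.\ an erasure, rather than an error. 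Then $s=0$ and $e\le 2r\le 2t$.
\end{itemize}
For the compact code, the corruptions are bit erasures with $e\le 2r\le 2t$ (using the same erasure argument). Theorem~\ref{thm:compact} then gives recovery with probability $\prod_{i=1}^{4t-e+1}(1-2^{-i})$.

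The main obstacle is Step 3: turning up to $2r$ corrupted bits into a count that meets $e+2s\le 2t$. The symbol-alignment argument depends on $\phi_K$ keeping a variable's two slots inside one symbol, and on symbols being bit-aligned with $\phi_K$ slots. Neither is obvious from the Fisher–Yates construction. If alignment cannot be shown, I would state the guarantee under the tie-to-erasure assumption ($v\ge 2$), or weaken the hypothesis to $r\le t/2$ for worst-case errors.
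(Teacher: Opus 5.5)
Your Steps 1--2 match the paper's proof. The paper simply says that each variable carries two bits, so $r$ renames touch at most $2r$ codeword positions; it takes for granted the pool stability you discuss in Step 1. The divergence is Step 3. The gap you flag there is real, and the paper does not close it either. It disposes of it in one line by \emph{treating} the affected positions as erasures, obtaining $e = 2r \le 2t$ and invoking Theorems~\ref{thm:rs} and~\ref{thm:compact}.

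But Algorithm~\ref{alg:detect} emits an erasure only on a tie. Whenever $B \le N < 2B$, including the minimum admissible capacity $N = B$, some positions carry a single vote. A flipped vote there is an undetected error costing $2$ in $e+2s$. A rename that flips both convention bits onto two such positions in different symbols therefore costs $4$, so $r = t$ renames can cost $4t > 2t$. The compact decoder solves only for erased positions, so it has no guarantee at all once $s > 0$.

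The paper's line would be exact if a rename always left the variable's convention unreadable, so that \texttt{ReadSiteValue} returns null. Pure abstentions only create ties, never wrong majorities, which gives $s = 0$ and $e \le 2r$. An adversary who picks a definite but different convention defeats this.

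Of your two repairs, drop symbol alignment. $\phi_K$ is a bit-level Fisher--Yates shuffle, so a variable's two consecutive slots land at essentially unrelated bit positions, usually in different 4-bit symbols. The paper's own proof assumes the worst case of $2r$ symbols.

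The erasure regime is the right repair, but ``$s = 0$'' is false as stated: two corrupted slots landing on a position with $v_j = 2$ produce an error. Do the accounting per position instead. With $c_j$ corrupted votes out of $v_j \ge 2$, an erasure needs $c_j = v_j/2 \ge 1$, and an error needs $c_j > v_j/2$, hence $c_j \ge 2$. So every position costs at most $c_j$, and $e + 2s \le \sum_j c_j \le 2r \le 2t$, which is what Theorem~\ref{thm:rs} needs. Since $\phi_K$ works in full cycles, ``$v_j \ge 2$ for all $j$'' is exactly the condition $N \ge 2B$, which is a clean hypothesis to add.

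For the compact code you genuinely need $s = 0$, i.e.\ $c_j \le v_j/2$ at every position. This holds, e.g., when $N \ge 2B$ and the corrupted slots hit distinct positions; $N \ge 2B$ alone does not ensure it. Absent such hypotheses, your fallback $r \le t/2$ is the correct worst-case statement for the GF(16) code, since every touched slot costs at most $2$.
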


\begin{proof}
Each variable encodes 2 bits via its naming convention, so renaming $r$ variables corrupts at most $2r$ bit positions in the codeword. For the $\text{GF}(16)$ code, these $2r$ bits map to at most $2r$ symbol positions in the worst case; treating them as erasures gives $e {=} 2r {\leq} 2t$ when $r {\leq} t$, so Theorem~\ref{thm:rs} guarantees exact correction. For the compact code, the same $2r$ positions are treated as erasures and Theorem~\ref{thm:compact} gives the recovery probability.
The bound is tight, since renaming $t{+}1$ variables can produce $2(t{+}1)$ corrupted positions, exceeding the $2t$ budget. Empirically, $t{=}8$ with 8 renames recovers 97.6\% on $D_{56}$ (Table~\ref{tab:main_results}).
\end{proof}

\begin{theorem}[Structural Channel Robustness]\label{thm:struct}
Let $N$ be the total number of sites and $B$ the codeword length in bits. Consider an attacker who corrupts $c$ structural sites chosen without knowledge of $\phi_K$. Over the uniform spread-spectrum assignment, the expected fraction of corrupted votes at each codeword bit position is $c/N$. When $c {<} N/2$, each position is expected to retain a correct majority, and the RS outer code absorbs residual errors up to its $e {+} 2s {\leq} 2t$ bound (Theorem~\ref{thm:rs}).
\end{theorem}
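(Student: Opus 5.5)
The plan is to prove the expectation claim by linearity over the randomness of $\phi_K$, then turn it into a per-position majority statement and hand off the residue to the outer code. I will model the keyed assignment as uniform: since $\phi_K$ is built from full Fisher--Yates cycles seeded by a key the attacker does not know, from the attacker's point of view each site $i$ is sent to a position in $\{0,\ldots,B-1\}$ with marginal probability $1/B$, and this is independent of the attacker's choice of the $c$ corrupted sites. The attacker fixes the corrupted set $C$, $|C| = c$, before $\phi_K$ is revealed to the analysis.

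For a fixed position $j$, let $v_j = |\phi_K^{-1}(j)|$. The full-cycle construction gives $v_j \in \{\lfloor N/B\rfloor, \lceil N/B\rceil\}$. Let $X_j = |\phi_K^{-1}(j) \cap C|$ be the number of corrupted votes at $j$. By linearity, $\mathbb{E}[X_j] = \sum_{i \in C} \Pr[\phi_K(i) = j] = c/B$. Dividing by $v_j \approx N/B$ gives an expected corrupted fraction of $c/N$ at every position. More carefully, conditioning on the multiset of cycle sizes makes $\phi_K^{-1}(j)$ a uniformly random $v_j$-subset of the $N$ sites. Then $X_j$ is hypergeometric with mean $v_j c/N$, and $\mathbb{E}[X_j/v_j] = c/N$ exactly. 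I will use this hypergeometric form because it avoids the floor/ceiling fudge.

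Next I argue the majority claim. A position decodes correctly iff $X_j < v_j/2$. It becomes an erasure iff $X_j = v_j/2$, and an error iff $X_j > v_j/2$. When $c < N/2$, the mean $v_j c/N$ lies strictly below $v_j/2$, which is exactly the sense in which each position is ``expected to retain a correct majority.'' I would also bound the failure probability with a hypergeometric tail bound (Hoeffding's bound applies to sampling without replacement): $\Pr[X_j \ge v_j/2] \le \exp(-2v_j(1/2 - c/N)^2)$. Summing over $j$ gives expected counts $\mathbb{E}[e]$ of erasures and $\mathbb{E}[s]$ of errors. Markov's inequality then gives a probability that $e + 2s \le 2t$, at which point Theorem~\ref{thm:rs} (or Theorem~\ref{thm:compact} for the compact code) finishes decoding.

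The main obstacle is that $v_j$ is tiny in practice ($v \approx 2$), so the concentration bound is weak and the ``expected majority'' statement does not by itself imply recovery. With $v_j = 2$, even a single hit produces a tie. I would therefore state the RS step honestly as conditional, namely success whenever the realized $(e,s)$ satisfy $e + 2s \le 2t$. For small $v$, I would also compute $\Pr[X_j \ge 1]$ directly from the hypergeometric law, giving $\mathbb{E}[e] \approx B\,(1 - (1 - c/N)^{2})$. A second subtlety is that variables occupy two consecutive slots, which breaks exact uniformity. This does not affect the claim, however, because the theorem concerns only structural sites, which are single-slot.
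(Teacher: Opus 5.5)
Your proposal follows the paper's proof. The paper likewise gets $\mathbb{E}[c_j]=cv/N$ (your $\mathbb{E}[X_j]=c/B$ with $v\approx N/B$) ``by symmetry'' of the keyed assignment, reads ``expected to retain a correct majority'' as that mean lying below $v/2$ when $c<N/2$, and hands residual positions to RS only conditionally, with concentration left to a variance remark and the experiments; so your Hoeffding/Markov layer is an optional strengthening the statement does not need. If you keep that layer, two corrections apply. First, under the full-cycle construction as described (each cycle of $B$ consecutive assignments is a permutation), $\phi_K^{-1}(j)$ contains one site per cycle rather than a uniform $v_j$-subset, so $X_j$ is a sum of independent per-cycle Bernoullis rather than hypergeometric (Hoeffding still applies, but $\mathbb{E}[X_j/v_j]=c/N$ need not hold exactly when $N$ is not a multiple of $B$). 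Second, for $v_j=2$ a double hit is an error, not an erasure, so $B(1-(1-c/N)^2)$ estimates $e+s$ rather than $e$.
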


\begin{proof}
The assignment $\phi_K$ distributes sites uniformly across $B$ bit positions in full permutation cycles, so each position receives $v {\approx} N{/}B$ votes. By symmetry, the expected number of corrupted votes at position $j$ is $\mathbb{E}[c_j] {=} cv {/} N$, so the majority condition $c_j {<} v/2$ holds in expectation when $c {<} N{/}2$.
Variance around this expectation can still tip individual positions past the majority threshold, particularly when $c$ approaches $N{/}2$ or when corrupted sites concentrate in a single channel. The RS outer code absorbs such residual errors up to the $2t$ bound. For $t{=}8$ on $D_{56}$, $N {\approx} 66$ and attacking the hardest single structural channel corrupts $c {\approx} 8.1$ votes (12\% of total), well below $N{/}2$. Empirically, this recovers 29.9\% of files, reflecting the concentrated-channel case where variance overwhelms the majority at several positions before RS correction engages; weaker channel attacks yield higher recovery rates (Section~\ref{sec:eval}).
\end{proof}

\begin{theorem}[Attack Budget]\label{thm:keyunaware}
Let $\phi_K$ be a keyed permutation mapping $N$ sites to $B$ codeword bit positions in full permutation cycles, with $v {=} {\lfloor} N{/}B {\rfloor}$ votes per bit. For an attacker who corrupts $c$ sites without knowledge of $K$, breaking RS decoding requires flipping the majority vote at strictly more than $2t$ codeword bit positions (by Theorem~\ref{thm:rs}). Each such flip requires at least ${\lceil} v{/}2 {\rceil}$ corrupted votes to land on that position. In expectation over the uniform permutation, the attacker therefore needs
$c {\geq} (2t{+}1) {\cdot} {\lceil} v{/}2 {\rceil} {=} \Omega\left(\frac{tN}{B}\right)$
corrupted sites to break detection.
\end{theorem}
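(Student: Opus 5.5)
The plan is a counting argument that chains three facts: the decoder's failure condition from Theorem~\ref{thm:rs}, a per-position cost for overturning a majority, and a linearity-of-expectation bound that turns per-position costs into a total site budget.

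\textbf{Step 1: reduce decoding failure to positions whose majority is lost.} Under Algorithm~\ref{alg:detect}, each codeword position $j$ ends up in one of three states: correct (it keeps a strict correct majority), erased (a tie), or erroneous (a strict wrong majority). Let $e$ be the number of erased positions and $s$ the number of erroneous ones. By Theorem~\ref{thm:rs}, decoding succeeds whenever $e+2s\le 2t$. So failure requires $e+2s>2t$, and hence $e+s\ge t+1$. For the bound as stated, I will count any position that loses its correct strict majority as a ``flipped'' position. I will adopt the statement's convention that more than $2t$ such positions are needed, which is exactly the threshold when all of them are erasures. I will remark that if the attacker instead converts positions to errors, $t+1$ flips can suffice, and this changes only the constant, not the $\Omega$.

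\textbf{Step 2: lower-bound the cost of each flip.} Every position receives at least $v=\lfloor N/B\rfloor$ votes, because $\phi_K$ assigns in full permutation cycles. A position with $v$ correct votes keeps a strict correct majority until at least $\lceil v/2\rceil$ of its votes are corrupted. Corrupting one site changes at most one vote; I will treat a variable as two vote slots, which only changes constants. Each corrupted vote lands on exactly one position. It follows that corrupted votes landing on distinct positions are disjoint, so breaking $2t+1$ positions requires at least $(2t+1)\lceil v/2\rceil$ corrupted votes in total.

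\textbf{Step 3: convert this to an expected site budget for a key-unaware attacker.} Because the attacker does not know $K$, the set of corrupted sites is independent of $\phi_K$. Over the uniform permutation, each corrupted site lands on position $j$ with probability about $1/B$, so $\mathbb{E}[c_j]=c/B\approx cv/N$, as in the proof of Theorem~\ref{thm:struct}. Since the attacker cannot aim its votes, the best it can do in expectation is the deterministic count from Step 2, where every corrupted vote is useful. That gives $c\ge(2t+1)\lceil v/2\rceil$. Finally, $\lceil v/2\rceil\ge v/2\ge (N/B-1)/2$, so $(2t+1)\lceil v/2\rceil=\Omega(tN/B)$ whenever $N\ge 2B$, and I will note that the claim is vacuous or trivial otherwise.

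\textbf{Main obstacle.} The hard part is making ``in expectation'' rigorous. A random placement wastes votes, since corruption spreads across positions and many of them never reach their threshold, so a key-unaware attacker actually needs more than the deterministic count. The honest statement is therefore that $(2t+1)\lceil v/2\rceil$ is a lower bound holding for every realization of $\phi_K$, even for an attacker who knows the key. I would present it in that form, which is strictly stronger and avoids a delicate expectation argument. I would add a remark that for random placement the required $c$ grows further, via a balls-in-bins estimate, without proving that sharper bound.
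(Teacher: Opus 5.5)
Your proposal is correct and follows essentially the same route as the paper's proof: the failure threshold from Theorem~\ref{thm:rs}, at least $\lceil v/2\rceil$ corrupted votes to overturn each position, and disjointness of those votes (the paper's ``linearity of the vote counts'') giving $(2t+1)\lceil v/2\rceil=\Omega(tN/B)$. Two of your observations are sharper than the paper's own treatment: that this count holds for every realization of $\phi_K$ (so the paper's expected cost of $N/2$ per target bit plays no role in the final inequality), and that $e+2s\le 2t$ by itself only forces $t+1$ overturned positions; just keep your ``even for a key-aware attacker'' claim explicitly conditional on your adopted $>2t$ convention, since unconditionally only $(t+1)\lceil v/2\rceil$ follows (for $t=2$, $v=4$, two strict wrong majorities plus one tie cost $8<10$ votes and already exceed the decoding radius).
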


\begin{proof}
RS decoding fails only when more than $2t$ symbol positions are corrupted. To corrupt a codeword bit, the attacker must overturn its majority vote, which requires at least ${\lceil} v{/}2 {\rceil}$ of the $v$ votes at that position to be flipped. Since $\phi_K$ is a uniformly random permutation and the attacker does not know which sites map to a given bit, the expected number of sites the attacker must corrupt to land ${\lceil} v{/}2 {\rceil}$ on any specific bit is $(\lceil v/2 \rceil) {\cdot} (N{/}v) {=} N/2$ per target bit. Targeting $2t{+}1$ bits simultaneously requires $c {\geq} (2t{+}1) {\cdot} {\lceil} v{/}2 {\rceil}$ corruptions by linearity of the vote counts, which simplifies to $\Omega(tN{/}B)$ for $B {\leq} N$.
\end{proof}

\begin{theorem}[Level 4 Impossibility]\label{thm:impossible}
Let $(\text{embed}, \text{detect})$ be a post-hoc syntactic watermark scheme whose output is obtained by applying only syntactic transformations (variable names, operator forms, or expression structure) that do not change the input-output behavior of the program. Let $T$ be any regeneration function whose output distribution on input $P$ depends only on the input-output behavior of $P$. Then for any program $S$, message $m$, and key $K$,
\[\Pr[\text{detect}(T(\text{embed}(S, m, K)), K) = m] \approx 1/|M|,\]
where $|M|$ is the message space size.
\end{theorem}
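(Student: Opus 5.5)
The plan is a data-processing (independence) argument. Let $\llbracket P \rrbracket$ denote the input-output behavior of a program $P$. The first step fixes $S$, $m$ and $K$ and writes $S_m {=} \text{embed}(S,m,K)$. By hypothesis, embedding applies only behavior-preserving syntactic transformations, so $\llbracket S_m \rrbracket {=} \llbracket S \rrbracket$ for every message $m$. Since the output distribution of $T$ depends only on $\llbracket\cdot\rrbracket$, the random variable $T(S_m)$ has the same distribution $\mathcal{D}_S$ for every $m \in M$. In particular, it is the same distribution we would obtain from $T(S)$ on the unwatermarked program.

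The second step couples this with detection. Define $q_S(m') {=} \Pr_{P \sim \mathcal{D}_S}[\text{detect}(P,K) {=} m']$, where the probability is also taken over any randomness of detect. Then $\Pr[\text{detect}(T(S_m),K) {=} m] {=} q_S(m)$, and this is a function of $m$ through the argument only; the distribution being evaluated does not depend on $m$. Hence the regenerated code carries zero mutual information about $m$. Concretely, $I(m; T(S_m)) {=} 0$ because $m \to S_m \to \llbracket S_m\rrbracket \to T(S_m)$ is a Markov chain whose middle node is constant in $m$.

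The third step averages over the message. For $m$ uniform on $M$, we have $\mathbb{E}_m[q_S(m)] {=} \frac{1}{|M|}\sum_{m} q_S(m) \le 1/|M|$, since the $q_S(m)$ sum to at most $1$ (FAIL absorbs the rest). This gives the exact statement that the success probability averaged over messages equals chance level. By Fano's inequality, the same conclusion holds for any decoder, not only the detector of Algorithm~\ref{alg:detect}.

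The main obstacle is interpreting the $\approx$ for a \emph{fixed} $m$, since $q_S(m)$ could be large for one particular message. I would first try to handle this by keyed-ness. Over a uniformly random key, the keyed assignment $\phi_K$ and the codeword symmetry make $q_S$ essentially uniform across messages. The detector's output on a key-independent input then behaves like the random-code false-positive event of design goal G5, which was bounded by $1/2^{24} {=} 1/|M|$. If this symmetrization is too weak, I would fall back to stating the result for uniformly random $m$ (or averaged over $K$) and remark that the pointwise version holds up to the detector's bias on unwatermarked code.
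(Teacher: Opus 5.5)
Your first two steps are the paper's proof. Behavior preservation of embed, together with the behavior-only dependence of $T$, gives $T(\text{embed}(S,m,K)) \sim T(S)$, so the regenerated code carries no information about $m$.

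You go further than the paper at the last step, and more rigorously. The paper simply asserts that detection succeeds only when $T$'s syntactic choices ``happen to match the encoded message, which occurs with probability $1/|M|$''. You correctly note that the premises only give $\Pr[\text{detect}(T(S_m),K)=m]=q_S(m)$, the probability that detect outputs $m$ on regenerated \emph{unwatermarked} code. You also correctly note that the provable chance-level statement is the average over uniform $m$, which is $\le 1/|M|$. Equality holds only if detect never returns FAIL, so ``equals chance level'' should read ``is at most chance level''. That averaging argument applies verbatim to any decoder, so drop the appeal to Fano, which would only bound success by roughly $1/\log_2|M|$.

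Your suspicion about fixed $m$ is justified, but neither the key-symmetrization route nor the ``averaged over $K$'' fallback works. Detection is deterministic, so for deterministic $T$ and fixed $K$ you get $q_S(m)\in\{0,1\}$. The message $m^*=\text{detect}(T(S),K)$, if not FAIL, is then recovered with probability $1$.

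Randomizing $K$ does not fix this in general. Suppose $T$ emits code whose sites all read $0$, i.e., snake\_case names and canonical structural forms. Then every vote is $0$ whatever $\phi_K$ is, assuming enough sites to cover every codeword position. The received word is the all-zero codeword of the fixed, unkeyed, linear compact code, and detect returns the all-zero message \texttt{aaaa} under every key.

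So the pointwise claim is exactly the false-positive rate of identifier $m$ on $T(S)$. That is the quantity the paper's goal G5 targets, and the theorem's hypotheses do not control it. The $m$-averaged bound is the statement you can actually prove, and it is the rigorous content behind the paper's ``$\approx$''.
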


\begin{proof}
By the first premise, $\text{embed}(S, m, K)$ and $S$ are behaviorally indistinguishable. By the second premise, $T$'s output distribution depends only on behavior, so $T(\text{embed}(S, m, K))$ and $T(S)$ are identically distributed. The random variable $T(\text{embed}(S, m, K))$ therefore carries no information about $m$. Detection succeeds only when $T$'s syntactic choices happen to match the encoded message, which occurs with probability $1/|M| 
{=} 1/2^{24} {\approx} 6 {\times} 10^{-8}$.
\end{proof}

\section{Evaluation}\label{sec:eval}
\begin{table*}[t]
    \centering
    \begin{tabular}{lcrrrrrr}
    \toprule
    Dataset & Capacity range & Total files & Eval $n$ & Avg vars & Avg struct & Avg total & Avg LOC \\
    \midrule
    $D_{32}$  & 32--39 bits    & 15,201 & 250 & 11.9 & 10.9 &  34.6 &   41.5 \\
    $D_{40}$  & 40--55 bits    & 12,308 & 250 & 15.6 & 14.4 &  45.6 &   59.8 \\
    $D_{56}$  & 56--87 bits    &  5,787 & 250 & 21.9 & 22.1 &  66.0 &   94.3 \\
    $D_{88}$  & 88--151 bits   &  1,419 & 250 & 33.8 & 34.2 & 101.7 &  152.8 \\
    $D_{152}$ & ${\geq}$152 bits &    300 & 250 & 70.6 & 74.6 & 201.8 &  463.1 \\
    \midrule
    gpt41  & LLM-generated & 268 & 250 & 32.1 & 26.5 &  92.1 &  108.8 \\
    llama4 & LLM-generated & 268 & 250 & 18.1 & 15.2 &  51.7 &   58.7 \\
    \bottomrule
    \end{tabular}
    \caption{Dataset summary. Each $D_k$ covers a disjoint capacity range. All evaluations use the compact random binary code.}
    \label{tab:datasets}
    \vspace{-2mm}
\end{table*}

\subsection{Experimental Setup}
\BfPara{Dataset} We draw human-written code from Project CodeNet~\cite{puri2021project}, a large-scale dataset of competitive programming solutions. We partition the Python subset into five disjoint capacity buckets, each covering a non-overlapping range of embedding capacity, so that cross-dataset comparisons isolate capacity effects rather than confounding with file-size overlap, and sample 250 files per bucket. To evaluate on real LLM-generated code, we extract 250 problem statements from CodeNet and regenerate solutions with GPT-4.1 and Llama-4-Maverick, yielding two additional datasets of 250 files each. Table~\ref{tab:datasets} summarizes all seven evaluation datasets.

\BfPara{Attacks} Each attack family probes a distinct design surface. Rename attacks isolate the variable-naming channel, per-channel structural attacks exercise each structural channel individually, and graduated per-site attacks stress the spread-spectrum voting layer directly. Beyond a clean no-attack baseline, we evaluate 17 attack types organized by our taxonomy.

\begin{itemize}
  \item \emph{Rename $r$}: rename $r {\in} \{2, 4, 8, 16, 32\}$ variables to random names (Level 2).
  \item \emph{Per-channel structural}: attack all sites of one channel, 8 individual attacks (Level 3).
  \item \emph{Graduated per-site}: randomly flip $x\%$ of individual sites with $x {\in} \{10, 20, 30, 40\}$, measuring precise corruption tolerance.
\end{itemize}

\BfPara{Metrics} Detection accuracy is the fraction of files where the correct model identifier is recovered. False positive rate (FPR) is the fraction of non-watermarked files that trigger a false detection.

\subsection{Correctness, Fidelity, and False Positive Rate}
Table~\ref{tab:fpr} summarizes baseline sanity checks. The system achieves perfect no-attack accuracy across all 1,750 watermarked files, and all watermarked files parse correctly (100\% AST validity). This is by construction, since our transformations are semantic-preserving syntactic rewrites and Python's \texttt{ast.unparse} always produces valid code. The false positive rate of 0/1{,}750 on both unwatermarked code (correct key) and wrong-key detection is consistent with the theoretical bound of $1/2^{24}$.

\begin{table}[t]
    \centering
    \begin{tabular}{llr}
    \toprule
    Test & $n$ & Result \\
    \midrule
    No-attack accuracy & 1,750 & \textbf{100\%} \\
    AST validity (watermarked code parses) & 1,750 & \textbf{100\%} \\
    \midrule
    FPR: unwatermarked code, correct key & 1,750 & \textbf{0.0\%} \\
    FPR: watermarked code, wrong key & 1,750 & \textbf{0.0\%}  \\
    \bottomrule
    \end{tabular}
    \caption{Correctness, fidelity, and false positive rate over 1,750 files (all 7 datasets, $n{=}250$ each). Theoretical FPR bound $1/2^{24} {\approx} 0.000006\%$. Empirical results are consistent with theory.}
    \label{tab:fpr}
    \vspace{-5mm}
\end{table}

\begin{table}[t]
    \centering
    \begin{tabular}{lrrrrr}
    \toprule
     & \multicolumn{3}{c}{RS parameter scan} & \multicolumn{2}{c}{Large-file scaling} \\
    \cmidrule(lr){2-4}\cmidrule(lr){5-6}
    Attack & $t{=}2$ & $t{=}4$ & $t{=}8$ & $t{=}2$ & $t{=}2$ \\
     & ($D_{32}$) & ($D_{40}$) & ($D_{56}$) & ($D_{88}$) & ($D_{152}$) \\
    \midrule
    No attack & 100.0 & 100.0 & 100.0 & 100.0 & 100.0 \\
    Rename 2  &  98.4 &  99.6 &  98.0 & \textbf{100.0} & \textbf{100.0} \\
    Rename 4  &  98.4 &  99.6 &  98.0 & \textbf{100.0} & \textbf{100.0} \\
    Rename 8  &   1.6 &  99.6 &  97.6 & \textbf{100.0} &  99.6 \\
    Rename 16 &  0.0$^\dagger$ & 0.0$^\dagger$ & 16.2 &  99.2$^\dagger$ & \textbf{100.0}$^\dagger$ \\
    Rename 32 & -- & -- & 0.0$^\dagger$ & 59.7$^\dagger$ & \textbf{100.0}$^\dagger$ \\
    \midrule
    aug\_assign    & 78.8 & 63.2 & 70.9 & \textbf{98.0} &  \textbf{98.0} \\
    compare\_dir   & 76.0 & 62.0 & 59.8 & 96.0 &  \textbf{98.8} \\
    equality\_dir  & 58.0 & 52.4 & 49.8 & 90.0 &  \textbf{96.0} \\
    list\_comp     & 82.0 & 82.4 & 76.5 & 99.6 & \textbf{100.0} \\
    ternary        & 97.6 & 96.0 & 95.6 &\textbf{100.0} & \textbf{100.0} \\
    operand\_order & 41.6 & 36.8 & 29.9 & 80.0 &  \textbf{81.6} \\
    empty\_coll    & 83.6 & 79.2 & 74.5 & 99.6 & \textbf{100.0} \\
    none\_check    &\textbf{100.0} & 98.8 & 97.2 & 99.6 & \textbf{100.0} \\
    \bottomrule
    \end{tabular}
    \caption{Detection accuracy (\%) under various attacks ($n{=}250$ per column). First three columns vary $t$; last two fix $t{=}2$ on larger files. $\dagger$: restricted to eligible files, meaning files with enough distinct variables to rename; \texttt{--}: no eligible files in this column.}
    \vspace{-0.2in}
    \label{tab:main_results}
    \vspace{-2mm}
\end{table}

\subsection{Robustness Under Attacks}\label{sec:robustness}
We first isolate rename and structural attacks on our own system (Table~\ref{tab:main_results}), then compare against post-hoc baselines in Section~\ref{sec:baseline_comparison} and against the generation-time paradigm in Section~\ref{sec:stone_comparison}. Each $t$ is deliberately evaluated on the smallest dataset that can carry its codeword, showing robustness under the hardest capacity conditions. The last two columns demonstrate that on larger files, even $t{=}2$ provides strong robustness through majority voting alone, often matching or exceeding higher $t$ on structural attacks.

Higher $t$ improves rename robustness by increasing the RS correction budget. The improvement is most dramatic for rename attacks: \texttt{rename\_8} improves from 1.6\% ($t{=}2$, $D_{32}$) to 97.6\% ($t{=}8$, $D_{56}$), confirming Theorem~\ref{thm:rename}.
For structural attacks, however, the pattern reverses: higher $t$ can \emph{reduce} structural robustness. For example, \texttt{operand\_order} drops from 41.6\% ($t{=}2$, $D_{32}$) to 29.9\% ($t{=}8$, $D_{56}$). The reason is that a higher $t$ requires a longer codeword ($24{+}4t$ bits), which dilutes votes per bit when the file capacity is only moderately above the codeword minimum. With fewer votes per bit, a concentrated structural attack more easily flips the majority on affected positions. On larger files ($D_{88}$, $D_{152}$), where votes per bit are abundant, $t{=}2$ outperforms $t{=}8$ on structural attacks while also handling heavy renames via majority voting alone.

Among structural channels, \texttt{operand\_order} is the weakest because it has the highest per-file site count (8.1 sites on $D_{56}$, 37\% of structural capacity), so destroying it removes a disproportionate share of votes. This is a property of Python's syntax distribution rather than a flaw in the scheme; channels with fewer sites per file (\texttt{ternary}, \texttt{none\_check}) are minimally affected by channel-targeted attacks. In Appendix~\ref{sec:appendix_extra}, Table~\ref{tab:crossdataset} shows the same attack recovers to 81.6\% on $D_{152}$ as votes per bit scale up, confirming that the 29.9\% figure reflects the hardest capacity condition rather than a design ceiling.
On larger files, the same $V$/bit relationship governs random per-site corruption: at 4.29 $V$/bit ($D_{152}$), the system tolerates 10\% random corruption at 94.1\% accuracy, while at 0.66 $V$/bit ($D_{32}$) it drops to 9.8\%. Appendix~\ref{sec:appendix_extra} reports the full cross-dataset matrix (Table~\ref{tab:crossdataset}) and graduated-corruption table (Table~\ref{tab:graduated}).

\subsection{Comparison with Post-Hoc Baselines}\label{sec:baseline_comparison}
Both ACW~\cite{li2025efficient} and SrcMarker~\cite{yang2024srcmarker} are post-hoc watermarks. ACW applies 45 idempotent Sourcery~\cite{sourcery2024} transformation rules with fixed-point iteration and detects watermarks by checking whether re-applying the rule set changes the code. It carries zero payload. SrcMarker is a dual-channel neural watermark embedding a 4-bit signature via a BiGRU encoder trained on variable naming and structural AST transforms.

For SrcMarker~\cite{yang2024srcmarker}, we implemented SrcMarker-Py, a faithful Python reimplementation substituting vocabulary token injection for naming-convention detection, using our eight structural channels, and applying max-pool aggregation so that 3 to 5 token structural changes produce detectable activation peaks rather than being diluted. The GRU encoder (128-dim embedding, 256-dim hidden, bidirectional) is trained on 20K CodeNet Python files for 20 epochs, achieving 91.47\% best validation accuracy. This follows the precedent of RoSeMary~\cite{zhang2025robust}.

\begin{table}[t]
    \centering
    \begin{tabular}{lrrr}
    \toprule
    Attack & ACW~\cite{li2025efficient} & SrcMarker-Py & Ours ($t{=}2$) \\
    \midrule
    No attack      & \textbf{100.0} &  89.8 & \textbf{100.0} \\
    Rename 2       &   0.0 &  24.5 & \textbf{98.4}  \\
    Rename 4       &   0.0 &  24.1 & \textbf{98.4}  \\
    Rename 8       &   0.0 &  \textbf{24.1} &   1.6          \\
    \midrule
    aug\_assign    &   0.0 &  \textbf{79.2} & 78.8  \\
    compare\_dir   &   0.0 &  \textbf{88.0} & 76.0  \\
    equality\_dir  &   0.0 &  \textbf{78.2} & 58.0  \\
    list\_comp     &   0.0 &  \textbf{88.9} & 82.0  \\
    ternary        &   0.0 &  90.7 & \textbf{97.6}  \\
    operand\_order &   0.0 &  \textbf{83.8} & 41.6  \\
    empty\_coll    &   0.0 &  \textbf{88.0} & 83.6  \\
    none\_check    &   0.0 &  89.4 & \textbf{100.0} \\
    \midrule
    Payload  & 0 bits  & 4 bits  & \textbf{24 bits} \\
    Training & None    & GPU, 20k & \textbf{None}   \\
    FPR      & 0.0\%   & 6.25\% / 27.6\%$^\ddagger$ & \textbf{0.0\%}  \\\bottomrule
    \end{tabular}
    \caption{Attack robustness on $D_{32}$ ($n{=}250$). Bold marks the best per row. $\ddagger$: SrcMarker-Py FPR is 6.25\% ($1/16$ random chance).}
    \label{tab:baseline_attacks}
    \vspace{-2mm}
\end{table}

\begin{table}[t]
    \centering
    \begin{tabular}{lrrr@{\hspace{8pt}}rrr}
    \toprule
    & \multicolumn{3}{c}{No attack (\%)} & \multicolumn{3}{c}{Rename 2 (\%)} \\
    \cmidrule(lr){2-4}\cmidrule(lr){5-7}
    Dataset & ACW & SM-Py & Ours & ACW & SM-Py & Ours \\
    \midrule
    $D_{32}$      & 100.0 &  89.8 & \textbf{100.0} & 0.0 & 24.5 & \textbf{98.4}  \\
    $D_{40}$      & 100.0 &  84.8 & \textbf{100.0} & 0.0 & 19.0 & \textbf{99.2} \\
    $D_{56}$      & 100.0 &  72.9 & \textbf{100.0} & 0.0 & 17.9 & \textbf{98.0}  \\
    $D_{88}$      & 100.0 &  61.1 & \textbf{100.0} & 0.0 & 16.7 & \textbf{100.0}  \\
    $D_{152}$     &  98.4 &  59.1 & \textbf{100.0} & 0.0 & 11.0 & \textbf{100.0} \\
    GPT-4.1       &  99.6 &  86.3 & \textbf{100.0} & 0.0 & 18.7 & \textbf{98.4}  \\
    Llama-4       & 100.0 &  86.0 & \textbf{100.0} & 0.0 & 21.6 & \textbf{100.0} \\
    \bottomrule
    \end{tabular}
    \caption{Cross-dataset baseline comparison ($n{=}250$ each, $t{=}2$). SM-Py = SrcMarker-Py. Bold marks the best per cell.}
    \label{tab:baseline_datasets}
    \vspace{-2mm}
\end{table}

Table~\ref{tab:baseline_attacks} compares all three systems on $D_{32}$ under every attack, and Table~\ref{tab:baseline_datasets} shows how no-attack accuracy and rename robustness scale across datasets.
On $D_{32}$ (Table~\ref{tab:baseline_attacks}), ACW achieves 100\% no-attack accuracy but collapses to 0\% under every single-transform attack. SrcMarker-Py maintains moderate structural robustness (79\% to 90\% on most channels) but starts at only 89.8\% no-attack\footnote{Some SrcMarker-Py attack rows (\eg ternary at 90.7\%) sit slightly above this no-attack baseline; the gap is noise around the 91.47\% validation ceiling, not improved robustness.} and drops to approximately 24\% under renames. Our system achieves 100\% no-attack, survives \texttt{rename\_2} and \texttt{rename\_4} at 98.4\%, and maintains 42\% to 100\% on structural attacks. Across datasets (Table~\ref{tab:baseline_datasets}), SrcMarker-Py degrades monotonically as files grow, falling to 59.1\% no-attack on $D_{152}$, while our system maintains 100\% no-attack on all CodeNet datasets.

\BfPara{ACW failure mode} ACW's detection checks whether re-applying its 45 transformation rule set changes the code. Any structural attack directly inverts one of ACW's idempotent transforms, and any rename breaks the identifier tokens that operand-ordering heuristics depend on. Thus \emph{any} single-transform attack achieves 0\% detection on every dataset. ACW also carries zero payload, making model attribution impossible.

\BfPara{SrcMarker-Py failure modes} Two effects explain why no-attack accuracy never reaches 100\%. First, the model's peak validation accuracy is 91.47\%, causing genuine decoding errors regardless of file length. Second, the BiGRU truncates tokenized input to 256 tokens, silently discarding watermark sites in longer files, explaining the monotonic drop from $D_{32}$ (89.8\%) to $D_{152}$ (59.1\%). Under rename attacks, SrcMarker-Py falls to approximately 24\%, near random 4-bit chance (6.25\%), because a single rename destroys the 2-bit naming channel. The decoder also distributes the 16 possible 4-bit outputs non-uniformly on unwatermarked code, so average FPR matches the $1/16$ random-chance bound, but an adversary who selects the worst-case target message reaches 27.6\% empirical FPR on $D_{32}$ (Table~\ref{tab:baseline_attacks}). Our training-free AST traversal processes complete source without truncation, maintaining 100\% no-attack on all datasets.

Our system dominates on every attribution-relevant dimension. The 24-bit payload distinguishes $2^{24}$ model configurations versus SrcMarker-Py's 16 and ACW's none; the false positive rate is 0\% versus 6.25\% and under 1\% respectively; training cost is zero versus 20k samples plus GPU and none; and no-attack accuracy is 100\% on every dataset versus a degradation from 89.8\% to 59.1\% as file length grows. The one dimension where SrcMarker-Py edges us out is structural-attack accuracy on $D_{32}$ (five of eight channels); this gap narrows on larger files (Appendix~\ref{sec:appendix_extra}, Table~\ref{tab:crossdataset}) where votes per bit rise and the majority-voting layer compensates.

\subsection{Comparison with STONE}\label{sec:stone_comparison}
Having compared against the closest post-hoc baselines, we now evaluate against STONE~\cite{kim2025marking}, a state-of-the-art generation-time watermark, to understand how the two paradigms differ under the same attack suite. STONE excludes syntax-critical tokens from the green/red list assignment during LLM generation and detects the watermark through a $z$-score on the token distribution. Because it is generation-time, it cannot be applied to our CodeNet or pre-generated LLM datasets. We evaluate on HumanEval+ code generated locally with Qwen2.5-Coder-3B-Instruct~\cite{hui2024qwen} from STONE's official implementation~\cite{kim2025marking}, yielding 820 paired watermarked/unwatermarked scores across 164 problems. We reproduce STONE on its official implementation (MarkLLM-derived) with three implementation errors corrected: (i) the prefix-length parameter is set to 1 so that the greenlist is re-seeded per token as the paper's formulation assumes (the shipped value of 0 collapses the greenlist to a constant, inflating the null-hypothesis $z$-score baseline), (ii) the $z$-score normalization counts scored tokens directly rather than conflating scored and skipped counts, and (iii) the generation-time skip-syntax gate is removed to eliminate a generation/detection asymmetry under stochastic decoding. Parameters otherwise follow the repository defaults ($\gamma{=}0.5$, \texttt{hash\_key=15485863}, Qwen2.5-Coder-3B-Instruct, 5 samples per problem). We use $\delta{=}2.0$, the top of STONE's reported sweep range; with the corrected $z$-score, the repository's $\delta{=}0.5$ setting yields an area under the ROC curve (AUROC) of approximately 0.64.

Table~\ref{tab:stone} compares both systems under the full attack suite. The results highlight a gap between AUROC as a summary statistic and practical detection rates. STONE's no-attack AUROC of 0.869 appears respectable, but at 1\% FPR it detects only 42.8\% of watermarked code, a 57\% miss rate even without any attack. At 5\% FPR, the detection rate rises to 60.4\% but still misses nearly 40\% of watermarked samples. The low baseline AUROC likely reflects the short length of HumanEval+ solutions, which provide limited token sequences for the statistical $z$-test. Structural attacks have a negligible effect on STONE's signal (AUROC drops by less than 0.02), which is expected since AST-level transformations do not alter the token-level green-list distribution.

\begin{table}[t]
    \centering
    \begin{tabular}{lrrrrr}
    \toprule
     & \multicolumn{3}{c}{STONE} & \multicolumn{2}{c}{Ours} \\
    \cmidrule(lr){2-4}\cmidrule(lr){5-6}
    Attack & AUROC & \makecell{TPR @\\FPR\,1\%} & \makecell{TPR @\\FPR\,5\%} & \makecell{Acc.\\(\%)} & \makecell{FPR\\(\%)} \\
    \midrule
    No attack      & 0.869 & 42.8\% & 60.4\% & 100.0 & 0.0 \\
    Rename 2       & 0.863 & 40.6\% & 56.8\% & 98.0  & 0.0 \\
    Rename 4       & 0.856 & 39.6\% & 55.7\% & 98.0  & 0.0 \\
    Rename 8       & 0.855 & 39.5\% & 55.4\% & 97.6  & 0.0 \\
    \midrule
    aug\_assign    & 0.850 & 39.6\% & 56.2\% & 70.9            & 0.0 \\
    compare\_dir   & 0.850 & 39.5\% & 56.2\% & 59.8            & 0.0 \\
    equality\_dir  & 0.851 & 39.5\% & 56.2\% & 49.8            & 0.0 \\
    list\_comp     & 0.850 & 39.6\% & 56.2\% & 76.5            & 0.0 \\
    ternary        & 0.850 & 39.6\% & 56.2\% & 95.6            & 0.0 \\
    operand\_order & 0.850 & 39.6\% & 56.3\% & 29.9            & 0.0 \\
    empty\_coll    & 0.850 & 39.6\% & 56.2\% & 74.5            & 0.0 \\
    none\_check    & 0.850 & 39.6\% & 56.2\% & 97.2            & 0.0 \\
    \midrule
    Post-hoc       & \multicolumn{3}{c}{No}       & \multicolumn{2}{c}{\textbf{Yes}} \\
    Payload        & \multicolumn{3}{c}{0 bits}   & \multicolumn{2}{c}{\textbf{24 bits}} \\
    Training       & \multicolumn{3}{c}{LLM access} & \multicolumn{2}{c}{\textbf{None}} \\
    \bottomrule
    \end{tabular}
    \caption{STONE (HumanEval+, 820 paired scores, Qwen2.5-Coder-3B) vs. ours ($D_{56}$, $n{=}250$, $t{=}8$). }
    \label{tab:stone}
    \vspace{-5mm}
\end{table}

\BfPara{Tokenizer coupling} STONE's detector does not invoke the LLM, but it does require the same tokenizer used during generation. The greenlist at each position is derived by hashing previous token IDs, so a different tokenizer produces different IDs, and the reconstructed greenlists no longer match. In practice, detection is tied to a specific model family. Our post-hoc design avoids this coupling entirely, since detection reads an AST rather than a token sequence.

\BfPara{Takeaway} This comparison illustrates the fundamental tradeoff between the two watermarking paradigms. Generation-time schemes like STONE are inherently robust to syntactic attacks because their signal lives in the token distribution, but they require LLM access, carry zero payload, and depend on tokenizer consistency. Post-hoc schemes like ours are more vulnerable to concentrated structural attacks but provide model attribution through a 24-bit payload without requiring model access or training. The two approaches are complementary, and combining them for layered robustness is a promising future direction.

\subsection{Real LLM-Generated Code}\label{sec:llm_real}
To validate that our system works on actual LLM-generated code, we generated Python solutions for 250 matched CodeNet~\cite{puri2021project} problems using GPT-4.1 and Llama-4-Maverick. Table~\ref{tab:llm_real} reports robustness under the standard attack suite.
The key finding is that code verbosity drives robustness. GPT-4.1 generates longer code (92.1 avg bits, 2.9 $V$/bit) compared to Llama-4 (51.7 avg bits, 1.6 $V$/bit). More code means more votes per codeword bit and better majority-voting resilience. The GPT-4.1 structural-attack results (95.6\% to 100\% except \texttt{operand\_order} at 78.8\%) are consistent with $D_{88}$ in Appendix~\ref{sec:appendix_extra} (Table~\ref{tab:crossdataset}), which has comparable average capacity (101.7 bits), confirming that the votes-per-bit relationship holds across both human-written and LLM-generated code. The same watermark scheme works on both models without any model-specific tuning. The minimum capacity across both LLM datasets is 32 bits, meaning every LLM-generated file in our evaluation is eligible for watermarking at $t{=}2$ (Appendix~\ref{sec:appendix_extra}, Table~\ref{tab:distribution}).

\begin{table}[t]
    \centering
    \begin{tabular}{lrr}
    \toprule
    Attack & GPT-4.1 (92.1 avg bits) & Llama4 (51.7 avg bits) \\
    \midrule
    No attack  & \textbf{100.0} & \textbf{100.0} \\
    Rename 2   &  98.4 & \textbf{100.0} \\
    Rename 4   &  92.8 & \textbf{99.6}  \\
    Rename 8   & \textbf{58.8}  & 24.4  \\
    Rename 16  & \textbf{61.0}  & 21.6  \\
    \midrule
    aug\_assign    & \textbf{97.2}  &  88.4 \\
    compare\_dir   & \textbf{96.8}  &  91.6 \\
    equality\_dir  & \textbf{95.6}  &  78.8 \\
    list\_comp     & \textbf{97.6}  &  93.2 \\
    ternary        & \textbf{100.0} &  98.8 \\
    operand\_order & \textbf{78.8}  &  71.2 \\
    empty\_coll    & \textbf{98.8}  &  94.8 \\
    none\_check    & \textbf{100.0} &  98.8 \\
    \bottomrule
    \end{tabular}
    \caption{Real LLM-generated code (250 matched problems, $t{=}2$). GPT-4.1 generates approximately $1.8\times$ more code (92.1 vs. 51.7 avg bits, 2.9 vs. 1.6 V/bit), directly explaining the robustness gap.}
    \label{tab:llm_real}
    \vspace{-5mm}
\end{table}

\subsection{LLM Rewriting Attack}\label{sec:llm_barrier}
An adversary with access to a capable LLM could attempt to evade detection by asking the model to rewrite the watermarked code while preserving its semantics. This constitutes a Level~4 attack in our taxonomy (Section~\ref{sec:threat}). We pass each watermarked file to the attacker LLM with the prompt \emph{``Refactor the following Python code. Keep the behavior identical. Return only the refactored code, no explanation.''}, temperature 0.3, top-$p$ 0.95, and no system prompt. The returned code is passed through our detector. We evaluate five attacker models spanning two capability tiers: GPT-4.1, GPT-4o-mini, Qwen2.5-Coder-32B, and Qwen3-Coder-30B are strong refactorers; GPT-3.5-turbo is a weaker model included to probe incomplete-regeneration behavior.

\begin{table}[t]
    \centering
    \begin{tabular}{lrr}
    \toprule
    Rewriting model & Survival rate & Refactoring quality \\
    \midrule
    GPT-4.1 & 0.0\% & Complete regeneration \\
    GPT-4o-mini & 0.4\% & Complete regeneration \\
    Qwen2.5-Coder-32B & 0.4\% & Complete regeneration \\
    Qwen3-Coder-30B & 1.6\% & Near-complete \\
    GPT-3.5-turbo & 32.8\% & Partial (incomplete) \\
    \bottomrule
    \end{tabular}
    \caption{Watermark survival under LLM rewriting ($t{=}8$, $n{=}250$). GPT-3.5's higher survival reflects weaker refactoring, not watermark robustness.}
    \label{tab:llm}
    \vspace{-5mm}
\end{table}

Table~\ref{tab:llm} demonstrates the information-theoretic limit of post-hoc watermarking. Strong LLMs perform complete semantic-preserving regeneration, independently choosing all variable names, operator forms, and code structure, a setting also studied as a stress test for classifier-based attribution on transformed code~\cite{choi2025transformed}. The watermark signal is entirely destroyed. GPT-3.5-turbo's 32.8\% survival reflects incomplete refactoring rather than watermark robustness. Inspecting the GPT-3.5 output on files where detection succeeded shows the model often renames a subset of variables while leaving several untouched, or restructures control flow while preserving the original variable names. Such partial regeneration leaks some of the watermarked syntactic choices through. Stronger models (GPT-4.1, GPT-4o-mini, Qwen2.5-Coder-32B) refactor end-to-end, independently recomputing every naming and structural decision, which drives survival to essentially zero.

This limit applies to all post-hoc syntactic watermarking approaches. As proved in Theorem~\ref{thm:impossible}, it is an information-theoretic barrier, not a design flaw.

\subsection{Efficiency}
Table~\ref{tab:efficiency} reports embedding and detection wall-clock times. All measurements use the secp256k1 elliptic curve on CPU, with no GPU or training data required. All configurations complete in under 250ms for embedding and under 50ms for detection, making the system practical for inline deployment in API pipelines. Timing scales modestly with $t$ because larger codewords require more spread-spectrum assignments, but remains sub-second even at $t{=}8$. By comparison, SrcMarker~\cite{yang2024srcmarker} and RoSeMary~\cite{zhang2025robust} require GPU-based training on thousands to millions of code samples before any embedding is possible, and STONE~\cite{kim2025marking} requires LLM inference during generation.

\begin{table}[t]
    \centering
    \begin{tabular}{lccc}
    \toprule
    $t$ & Dataset & Embed (ms) & Detect (ms) \\
    \midrule
    2 & $D_{32}$ & 65 $\pm$ 71 & 28 $\pm$ 14 \\
    4 & $D_{40}$ & 85 $\pm$ 70 & 34 $\pm$ 13 \\
    8 & $D_{56}$ & 145 $\pm$ 94 & 50 $\pm$ 25 \\
    \bottomrule
    \end{tabular}
    \caption{Embedding and detection time using the secp256k1 elliptic curve. Zero GPU, zero training. Sub-second for all configurations.}
    \label{tab:efficiency}
    \vspace{-2mm}
\end{table}

\section{Discussion}\label{sec:discussion}
\BfPara{The LLM robustness barrier} Generation-time watermarking~\cite{kirchenbauer2023watermark,lee2024wrote,kim2025marking} is the complementary approach to the Level~4 barrier established in Section~\ref{sec:llm_barrier}. These methods embed the watermark in the token sampling distribution, which is reproduced whenever the LLM generates output. A promising future direction is combining generation-time distributional watermarking with our post-hoc structural watermarking for layered robustness.

\BfPara{Capacity vs. robustness tradeoff} The RS parameter $t$ controls a capacity-robustness tradeoff, with competing effects on rename versus structural robustness analyzed in Section~\ref{sec:robustness} and Appendix~\ref{sec:appendix_extra} (Table~\ref{tab:crossdataset}). Choose $t$ based on the expected attack profile rather than file size alone. If heavy rename attacks dominate, use $t{=}8$ to maximize the RS correction budget. If structural attacks or random per-site corruption dominate, prefer $t{=}2$ even on large files, because shorter codewords preserve votes per bit and strengthen the majority-voting layer. Table~\ref{tab:crossdataset} confirms this empirically: on $D_{152}$, $t{=}2$ holds 100\% under 16 variable renames via majority voting alone, while also outperforming $t{=}8$ on most structural attacks.

\begin{table*}[t]
    \centering
    \begin{tabular}{lccccccccc}
    \toprule
    & Year & Domain & Post-hoc & Payload & FPR & Training & Robustness Test & LLM Test \\
    \midrule
    KGW~\cite{kirchenbauer2023watermark}       & 2023 & NLP & No  & 0 bits  & --          & LLM access  & Partial      & No      \\
    SWEET~\cite{lee2024wrote}   & 2024 & PL  & No  & 0 bits  & --          & LLM access  & Partial      & Partial \\
    STONE~\cite{kim2025marking}   & 2026 & PL  & No  & 0 bits  & --          & LLM access  & Partial      & Partial \\
    CODEIP~\cite{guan2024codeip} & 2024 & PL  & No  & Multi   & --          & GPU+data    & Truncation only      & No      \\
    Qu~\etal~\cite{qu2025provably} & 2025 & NLP & No & Multi & --       & LLM access  & Non-adaptive      & No      \\
    ACW~\cite{li2025efficient}       & 2025 & PL  & Yes & 0 bits  & $<$1\%     & None        & Partial & No      \\
    SrcMarker~\cite{yang2024srcmarker} & 2024 & PL & Yes & 4 bits & 6.25\%    & GPU+data    & Partial & No      \\
    RoSeMary~\cite{zhang2025robust} & 2025 & PL & Yes & 4 bits & 6.25\%     & GPU+data    & No      & Yes     \\
    \midrule
    \textbf{Ours} & -- & \textbf{PL} & \textbf{Yes} & \textbf{24 bits} & $10^{-6}$\% & \textbf{None} & \textbf{17 attack types} & \textbf{Yes (5 LLMs)} \\
    \bottomrule
    \end{tabular}
    \caption{Comparison with existing watermarking methods. All schemes preserve program semantics (G3), so we omit that column. Our system is the only post-hoc approach with $>$4-bit payload, zero training cost, and comprehensive robustness evaluation.}
    \label{tab:comparison}
    \vspace{-2mm}
\end{table*}

\BfPara{Deployment considerations} The provider-as-verifier model requires institutional infrastructure but offers practical advantages. A centralized detection service would accept code submissions from authorized verifiers (teachers, employers, courts, \etc), test the code against all registered provider keys in under one second (Section~\ref{sec:threat}), and return a certificate identifying the source model or reporting no watermark detected.

\BfPara{Oracle resistance} Any binary-answer detector can be queried as an oracle for iterative removal (Section~\ref{sec:oracle}). Three deployment-level mitigations raise the practical cost without providing a cryptographic guarantee, namely a per-file query cap (\eg three times), Levenshtein-based deduplication~\cite{levenshtein1966binary} of near-identical resubmissions, and audit logging of query patterns. 
Together, they force the attacker to substantially modify the file between queries rather than flipping individual sites, collapsing $O(N)$ oracle access into per-query paraphrasing and, in the limit, into a Level 4 LLM regeneration attack under which no post hoc watermark survives.

\BfPara{Key management} Each provider registers a unique key $K_i$ with the detection service. Our implementation uses the secp256k1 elliptic curve, providing a key space of approximately $2^{256}$~\cite{secg2010sec2}, far exceeding cryptographic requirements~\cite{nist2020sp80057}. A malicious provider cannot register a key that maps to another provider's identifier because the RS-decoded message must match the provider's registered name; even if the RS decoder produces a valid codeword under a different key, the decoded string will not match the target provider's identifier (probability $1/2^{24}$).

\BfPara{Language scope} Our evaluation focuses on Python, consistent with all related work on post-hoc code watermarking~\cite{lee2024wrote,kim2025marking,li2025efficient,zhang2025robust}. The multi-channel spread-spectrum architecture is language-agnostic and requires only a parser that produces an AST and a set of semantically equivalent syntactic pairs. Extension to other languages requires identifying language-specific syntactic equivalences and is left as future work.

\section{Related Work}\label{sec:related}
Table~\ref{tab:comparison} provides a systematic comparison across all dimensions.

\BfPara{Generation-time watermarking} KGW~\cite{kirchenbauer2023watermark} introduced green/red list watermarking for LLM-generated text, biasing token sampling toward green-list tokens and detecting the watermark via a $z$-score. KGW also identifies the oracle risk and proposes access monitoring as a mitigation. SWEET~\cite{lee2024wrote} adapts KGW to code by applying the bias only to high-entropy tokens, preserving code quality at deterministic positions such as keywords and syntax. STONE~\cite{kim2025marking} further excludes syntax-critical tokens. All three require LLM access during generation and embed zero payload. CODEIP~\cite{guan2024codeip} extends generation-time embedding with grammar-guided logit biasing but also requires LLM access and evaluates only against truncation. Our approach operates post-hoc and embeds 24 bits of identifying information.

\BfPara{Post-hoc multi-bit watermarking} RoSeMary~\cite{zhang2025robust} is the closest to our work, offering post-hoc code watermarking with a multi-bit payload\footnote{RoSeMary's authors have not released source code or trained model weights, so we could not reproduce it. Our comparison, therefore, covers only design-level properties stated in the published manuscript, including the 4-bit payload (from which FPR$\,{=}\,1/16$ follows directly) and reliance on end-to-end neural training. We do not include empirical robustness measurements for RoSeMary.}. It uses a CodeT5-based neural encoder trained end-to-end to embed a 4-bit signature. Our payload is $6{\times}$ larger (24 vs. 4 bits, reducing FPR from 6.25\% to $10^{-6}\%$), our system requires zero training versus millions of samples and GPU, we provide formal robustness theorems versus empirical-only evaluation, and we test against 17 attack types, including 5 LLM rewriting models.
SrcMarker~\cite{yang2024srcmarker} is a dual-channel neural watermark that trains a BiGRU encoder end-to-end to embed 4 bits via variable naming and 10 AST transforms. It supports C, C++, Java, and JavaScript, but not Python. We implement SrcMarker-Py for Python comparison (Section~\ref{sec:baseline_comparison}), finding that our training-free design matches or exceeds SrcMarker's robustness on rename attacks while providing $6\times$ the payload.

\BfPara{Post-hoc presence detection} ACW~\cite{li2025efficient} applies 45 idempotent code transformations via the Sourcery~\cite{sourcery2024} refactoring engine and detects watermarks by checking if re-applying them changes the code. It is training-free and efficient (approximately 30ms), but carries zero payload and collapses to 0\% detection under any single-transform attack (Section~\ref{sec:baseline_comparison}). Our system matches ACW's deployment simplicity while providing multi-bit attribution and formal robustness.


\section{Conclusion}\label{sec:conclusion}
Post-hoc code watermarking has been stuck between zero-payload presence detectors and training-heavy neural schemes that carry only 4 bits. We show that composing two independent channel families with a keyed spread-spectrum assignment and an RS outer code lifts the payload to 24 bits, runs on CPU without training data, and admits formal robustness proofs for every attack level below Level 4. Evaluation on 1,750 Python files across 17 attack types validates the construction end-to-end.
The Level 4 barrier, complete code regeneration by a capable LLM, is an information-theoretic limit for every post-hoc syntactic watermark (Theorem~\ref{thm:impossible}), not a defect of our construction. End-to-end provenance, therefore, requires pairing post-hoc attribution with a generation-time signal embedded in the token distribution. Extending the channel catalog to additional languages and building a multi-provider detection registry are the concrete next steps.

\bibliographystyle{ACM-Reference-Format}
\bibliography{references}

\appendix 

\begin{table*}[t]
    \centering
    \resizebox{\textwidth}{!}{%
    \footnotesize
    \begin{tabular}{l@{\hspace{4pt}}rrrrr@{\hspace{8pt}}rrrrr@{\hspace{8pt}}rrrrr}
    \toprule
    & \multicolumn{5}{c}{ACW} & \multicolumn{5}{c}{SrcMarker-Py} & \multicolumn{5}{c}{Ours ($t{=}2$)} \\
    \cmidrule(lr){2-6}\cmidrule(lr){7-11}\cmidrule(lr){12-16}
    Attack & $D_{32}$ & $D_{40}$ & $D_{56}$ & $D_{88}$ & $D_{152}$ & $D_{32}$ & $D_{40}$ & $D_{56}$ & $D_{88}$ & $D_{152}$ & $D_{32}$ & $D_{40}$ & $D_{56}$ & $D_{88}$ & $D_{152}$ \\
    \midrule
    No attack      & 100.0 & 100.0 & 100.0 & 100.0 & 98.4 & 89.8 & 84.8 & 72.9 & 61.1 & 59.1 & \textbf{100.0} & \textbf{100.0} & \textbf{100.0} & \textbf{100.0} & \textbf{100.0} \\
    Rename 2       & 0.0 & 0.0 & 0.0 & 0.0 & 0.0 & 24.5 & 19.0 & 17.9 & 16.7 & 11.0 & \textbf{98.4} & \textbf{99.2} & \textbf{98.0} & \textbf{100.0} & \textbf{100.0} \\
    Rename 4       & 0.0 & 0.0 & 0.0 & 0.0 & 0.0 & 24.1 & 19.5 & 18.3 & 16.7 & 11.0 & \textbf{98.4} & \textbf{99.2} & \textbf{98.0} & \textbf{100.0} & \textbf{100.0} \\
    Rename 8       & 0.0 & 0.0 & 0.0 & 0.0 & 0.0 & 24.1 & 19.5 & 18.3 & 16.7 & 11.0 & \textbf{1.6} & \textbf{5.2} & \textbf{59.6} & \textbf{100.0} & \textbf{99.6} \\
    \midrule
    aug\_assign    & 0.0 & 0.0 & 0.0 & 0.0 & 0.0 & \textbf{79.2} & 80.1 & 67.2 & 55.6 & 56.1 & 78.8 & \textbf{92.0} & \textbf{94.8} & \textbf{98.0} & \textbf{98.0} \\
    compare\_dir   & 0.0 & 0.0 & 0.0 & 0.0 & 0.0 & \textbf{88.0} & 81.4 & 69.0 & 59.4 & 55.7 & 76.0 & \textbf{86.4} & \textbf{95.2} & \textbf{96.0} & \textbf{98.8} \\
    equality\_dir  & 0.0 & 0.0 & 0.0 & 0.0 & 0.0 & \textbf{78.2} & 74.0 & 60.7 & 56.0 & 50.6 & 58.0 & \textbf{77.6} & \textbf{86.8} & \textbf{90.0} & \textbf{96.0} \\
    list\_comp     & 0.0 & 0.0 & 0.0 & 0.0 & 0.0 & \textbf{88.9} & 84.4 & 72.5 & 60.3 & 58.2 & 82.0 & \textbf{91.6} & \textbf{99.2} & \textbf{99.6} & \textbf{100.0} \\
    ternary        & 0.0 & 0.0 & 0.0 & 0.0 & 0.0 & 90.7 & 84.4 & 72.5 & 61.1 & 58.2 & \textbf{97.6} & \textbf{98.8} & \textbf{100.0} & \textbf{100.0} & \textbf{100.0} \\
    operand\_order & 0.0 & 0.0 & 0.0 & 0.0 & 0.0 & \textbf{83.8} & \textbf{82.3} & 69.9 & 56.4 & 57.8 & 41.6 & 63.2 & \textbf{70.4} & \textbf{80.0} & \textbf{81.6} \\
    empty\_coll    & 0.0 & 0.0 & 0.0 & 0.0 & 0.0 & \textbf{88.0} & 81.4 & 72.1 & 60.3 & 59.1 & 83.6 & \textbf{92.0} & \textbf{100.0} & \textbf{99.6} & \textbf{100.0} \\
    none\_check    & 0.0 & 0.0 & 0.0 & 0.0 & 0.0 & 89.4 & 84.0 & 71.2 & 59.0 & 57.4 & \textbf{100.0} & \textbf{100.0} & \textbf{100.0} & \textbf{99.6} & \textbf{100.0} \\
    \midrule
    Payload & \multicolumn{5}{c}{0 bits} & \multicolumn{5}{c}{4 bits} & \multicolumn{5}{c}{\textbf{24 bits}} \\
    Training & \multicolumn{5}{c}{None} & \multicolumn{5}{c}{GPU, 20k samples} & \multicolumn{5}{c}{\textbf{None}} \\
    \bottomrule
    \end{tabular}%
    }
    \caption{Full three-way baseline comparison across all five disjoint datasets ($n{=}250$ each). Bold marks the best per cell.}
    \label{tab:app_baselines_full}
\end{table*}

\section{Full Baseline Comparison Tables}\label{sec:appendix_baselines}
Table~\ref{tab:app_baselines_full} provides the complete per-dataset attack results for ACW~\cite{li2025efficient}, SrcMarker-Py, and our system across all five CodeNet disjoint datasets ($D_{32}$--$D_{152}$), supplementing the condensed Tables~\ref{tab:baseline_attacks} and~\ref{tab:baseline_datasets}.

\section{Additional Experimental Analyses}\label{sec:appendix_extra}

\subsection{Cross-Dataset Analysis}
To demonstrate how file size (votes/bit) affects robustness, we fix $t{=}2$ and evaluate the same attacks across all five datasets (Table~\ref{tab:crossdataset}). This isolates the effect of votes/bit from the RS correction capacity.

\begin{table}[t]
    \centering
    \setlength{\tabcolsep}{3pt}
    \resizebox{\columnwidth}{!}{%
    \begin{tabular}{lrrrrr}
    \toprule
    Attack & $D_{32}$ (0.66) & $D_{40}$ (0.87) & $D_{56}$ (1.31) & $D_{88}$ (2.11) & $D_{152}$ (4.29) \\
    \midrule
    No attack & 100.0 & 100.0 & 100.0 & 100.0 & 100.0 \\
    Rename 2  &  98.4 &  99.2 &  98.0 & \textbf{100.0} & \textbf{100.0} \\
    Rename 4  &  98.4 &  99.2 &  98.0 & \textbf{100.0} & \textbf{100.0} \\
    Rename 8  &   1.6 &   5.2 &  59.6 & \textbf{100.0} &  99.6 \\
    Rename 16 &   0.0$^\dagger$ &   0.0$^\dagger$ &  29.0 &  99.2 & \textbf{100.0} \\
    \midrule
    aug\_assign    &  78.8 &  92.0 &  94.8 & \textbf{98.0} &  \textbf{98.0} \\
    compare\_dir   &  76.0 &  86.4 &  95.2 &  96.0 &  \textbf{98.8} \\
    equality\_dir  &  58.0 &  77.6 &  86.8 &  90.0 &  \textbf{96.0} \\
    list\_comp     &  82.0 &  91.6 &  99.2 &  99.6 & \textbf{100.0} \\
    ternary        &  97.6 &  98.8 & \textbf{100.0} & \textbf{100.0} & \textbf{100.0} \\
    operand\_order &  41.6 &  63.2 &  70.4 &  80.0 &  \textbf{81.6} \\
    empty\_coll    &  83.6 &  92.0 & \textbf{100.0} &  99.6 & \textbf{100.0} \\
    none\_check    & \textbf{100.0} & \textbf{100.0} & \textbf{100.0} &  99.6 & \textbf{100.0} \\
    \bottomrule
    \end{tabular}}
    \caption{Cross-dataset analysis (\%, $t{=}2$, $n{=}250$). Column headers list each dataset, followed by its average votes per codeword bit $v = N/B$ with $B = 32$. Bold marks the best per row. $\dagger$: few eligible files.}
    \label{tab:crossdataset}
    \vspace{-2mm}
\end{table}

The results reveal two key insights. First, rename robustness scales with votes/bit: \texttt{rename\_16} improves from 0.0\% ($D_{32}$, 0.66 V/bit) to 100\% ($D_{152}$, 4.29 V/bit), demonstrating that majority voting, rather than RS correction, is the primary defense against variable renaming on larger files. Second, \texttt{operand\_order} is the most persistent structural attack, remaining the hardest single-channel attack at every dataset size (41.6\% on $D_{32}$ to 81.6\% on $D_{152}$).

\subsection{Graduated Per-Site Attack}
To measure corruption tolerance precisely, we randomly flip individual site bits (not entire channels) and vary the corruption percentage from 0\% to 40\%. At 50\% corruption, every configuration drops to 0\% accuracy, as expected once the majority vote fails. Table~\ref{tab:graduated} shows the results with $t{=}2$ across all five datasets, directly demonstrating that votes/bit determines robustness.

\begin{table}[t]
    \centering
    \begin{tabular}{lrrrrrr}
    \toprule
    Dataset & V/bit & 0\% & 10\% & 20\% & 30\% & 40\% \\
    \midrule
    $D_{32}$  & 0.66 & 100.0 &  9.8 & 0.0 & 0.0 & 0.0 \\
    $D_{40}$  & 0.87 & 100.0 & 20.4 & 0.4 & 0.0 & 0.0 \\
    $D_{56}$  & 1.31 & 100.0 & 44.8 & 3.2 & 0.0 & 0.0 \\
    $D_{88}$  & 2.11 & 100.0 & 77.9 & 25.3 & 0.4 & 0.0 \\
    $D_{152}$ & 4.29 & 100.0 & \textbf{94.1} & \textbf{58.0} & \textbf{13.0} & 0.4 \\
    \bottomrule
    \end{tabular}
    \caption{Graduated per-site attack (\%, $t{=}2$, $n{=}250$). Votes/bit directly determines corruption tolerance: at 4.29 V/bit ($D_{152}$), the system tolerates 20\% random corruption at 58.0\%.}
    \label{tab:graduated}
    \vspace{-2mm}
\end{table}

The relationship between votes/bit and corruption tolerance is monotonic. At 0.66 V/bit ($D_{32}$), 10\% corruption yields only 9.8\% accuracy; at 4.29 V/bit ($D_{152}$), the same 10\% corruption yields 94.1\%. This follows directly from the spread-spectrum voting mechanism (Theorem~\ref{thm:struct}). With $v$ votes per bit position, random $p$-fraction corruption flips each vote independently, and the majority remains correct as long as fewer than $v/2$ votes are corrupted. Higher $v$ makes the majority exponentially more resilient to the same corruption fraction. In practice, realistic attacks corrupt individual sites rather than entire channels, so these graduated results are the most deployment-relevant robustness measure.

\subsection{Capacity Distribution}
Table~\ref{tab:distribution} shows the distribution of embedding capacity across all seven evaluation datasets. Median capacity ranges from 34 bits ($D_{32}$, tightly matching the 32-bit codeword) to 172 bits ($D_{152}$, $5.4\times$ the codeword), illustrating the wide capacity span. For practical deployment, $t{=}2$ (32-bit codeword) is recommended for shorter code files, while $t{=}8$ (56-bit codeword) is recommended for longer files where capacity is not a constraint. For LLM-generated code from non-trivial problem statements, the generated code almost always provides sufficient capacity, with GPT-4.1 averaging 92.1 bits per file (Table~\ref{tab:datasets}).

\begin{table}[t]
    \centering
    \begin{tabular}{lrrrrrr}
    \toprule
    Dataset & Min & P25 & Median & P75 & Max & Std \\
    \midrule
    $D_{32}$  &  32 &  33 &  34 &  36 &  39 &  2.1 \\
    $D_{40}$  &  40 &  40 &  43 &  47 &  55 &  4.7 \\
    $D_{56}$  &  56 &  57 &  62 &  68 &  87 &  8.3 \\
    $D_{88}$  &  88 &  89 &  98 & 112 & 151 & 15.6 \\
    $D_{152}$ & 152 & 162 & 172 & 213 & 530 & 84.6 \\
    \midrule
    gpt41  &  32 &  53 &  77 & 114 & 403 & 52.6 \\
    llama4 &  32 &  38 &  46 &  59 & 144 & 18.3 \\
    \bottomrule
    \end{tabular}
    \caption{Distribution of total embedding capacity (variable bits + structural bits, $n{=}250$ per dataset). Higher capacity means more votes per codeword bit and stronger robustness.}
    \label{tab:distribution}
    \vspace{-2mm}
\end{table}

\section{Open Science} 
All artifacts needed to evaluate this paper's core contributions are available at \url{https://github.com/soohyeonc/Multi_Channel_Watermarking}. The repository includes the full embedding and detection pipeline (the secp256k1 ECC key module, spread-spectrum assignment, compact and GF(16) RS encoders, and all eight structural channel transformers), every attack implementation used in our evaluation (rename, per-channel structural, graduated per-site, and LLM rewriting), our SrcMarker-Py reimplementation, and the corrected STONE~\cite{kim2025marking} reproduction with code diffs documented in \texttt{baselines/STONE/STONE\_settings.md}. Human-written code is drawn from the publicly available Project CodeNet~\cite{puri2021project}; LLM-generated code was produced using public APIs (OpenAI GPT-4.1, GPT-4o-mini, GPT-3.5) and open-source models (Qwen2.5-Coder, Qwen3-Coder, Llama-4-Maverick). Dataset partitioning scripts and pre-filtered evaluation sets are included, and no artifact is withheld from review.


\section{Ethical Considerations}
Our watermarking system is designed for legitimate attribution of LLM-generated code, such as academic integrity verification, intellectual property compliance, and security incident response. The provider-as-verifier deployment model ensures that only authorized parties can run detection, preventing misuse as a surveillance tool. We advocate for transparent disclosure when watermarking is applied and for legal frameworks governing its use.

No human subjects, user data, or real-world vulnerability exploitation are involved in this work. All evaluation code is drawn from publicly available datasets (Project CodeNet) or generated by publicly accessible LLM APIs. The STONE~\cite{kim2025marking} baseline reproduction involved correcting implementation errors in a publicly released codebase, which we document and will share to benefit the community.

\BfPara{Dual-use considerations} Watermarking can be used for legitimate attribution but also for covert tracking of individual developers' coding patterns. We advocate for transparent disclosure: providers should publicly announce that their generated code is watermarked, analogously to how telemetry and data-collection policies are disclosed today. The detection oracle (Section~\ref{sec:oracle}) could be misused for surveillance if the provider-as-verifier service logs and correlates queries against individuals; we recommend audit transparency and aggregate query-rate public reporting to prevent this. Finally, attribution alone does not establish authorship responsibility. A developer who accepts LLM-generated code without review remains accountable for their decisions, and watermarking should not be a sole determinant in legal or academic sanctions.

\end{document}